\newtheorem{theorem}{Theorem}
\newtheorem{condition}{Condition}
\newtheorem{lemma}{Lemma}
\begin{document}

\begin{frontmatter}

\title{
Locally sparse quantile estimation for a partially functional interaction model}



\author[RUC]{Weijuan Liang}
\ead{weijuanliang@yeah.net}

\author[XMU]{Qingzhao Zhang\corref{mycorrespondingauthor}}
\cortext[mycorrespondingauthor]{Corresponding author}
\ead{qzzhang@xmu.edu.cn}

\author[Yale]{Shuangge Ma\corref{mycorrespondingauthor}}
\ead{shuangge.ma@yale.edu}

\address[RUC]{School of Statistics, Renmin University of China, Beijing, China}
\address[XMU]{Department of Statistics and Data Science, School of Economics, The Wang Yanan Institute for Studies in Economics, and Fujian Key Lab of Statistics, Xiamen University, Xiamen, China}
\address[Yale]{Department of Biostatistics, Yale School of Public Health, New Haven, Connecticut, USA}

\begin{abstract}
Functional data analysis has been extensively conducted. In this study, we consider a partially functional model, under which some covariates are scalars and have linear effects, while some other variables are functional and have unspecified nonlinear effects. Significantly advancing from the existing literature, we consider a model with interactions between the {functional and scalar covariates}. To accommodate long-tailed error distributions which are not uncommon in data analysis, we adopt the quantile technique for estimation. To achieve more interpretable estimation, and to accommodate many practical settings, we assume that the functional covariate effects are locally sparse (that is, there exist subregions on which the effects are exactly zero), which naturally leads to a variable/model selection problem. We propose respecting the ``main effect, interaction’’ hierarchy, which postulates that if a subregion has a nonzero effect in an interaction term, then its effect has to be nonzero in the corresponding main functional effect. For estimation, identification of local sparsity, and respect of the hierarchy, we propose a penalization approach. An effective computational algorithm is developed, and the consistency properties are rigorously established under mild regularity conditions. Simulation shows the practical effectiveness of the proposed approach. The analysis of the Tecator data further demonstrates its practical applicability. Overall, this study can deliver a novel and practically useful model and a statistically and numerically satisfactory estimation approach.

\end{abstract}

\begin{keyword}
Partially functional model \sep interaction analysis \sep locally sparse estimation \sep robust estimation
\end{keyword}

\end{frontmatter}


\section{Introduction}
Functional data analysis has become routine in statistics. A popular regression setting has a scalar response and functional covariates. In practice, we may directly observe the functional covariates or their realizations at discrete observational (usually time or space) points. In the latter case, estimation of the functional covariates may be first needed. For this regression setting, there have been extensive methodological, computational, and theoretical developments as well as data analyses \cite{aneiros2021variable, cardot2003spline, fan2013functional, yao2005functional}. In particular, both {mean} and robust estimations have been developed \cite{berrendero2019rkhs, shin2016rkhs, tong2018analysis}.

As a natural extension of the aforementioned model, in a partially functional model, there are two types of covariates. The first type of covariates is functional, as described above. In addition, there are also scalar covariates with linear effects. Such a model shares some similar spirit with the partially linear regression \cite{cui2017estimation} but may be more complicated in multiple aspects. As a ``natural next step’’, we further consider the model with interactions between the {functional and scalar covariates}. Interaction is a “basic” concept in data analysis. However, most of the existing interaction analyses are limited to parametric covariate effects. In the literature, there are a handful of studies that examine interactions in the partially linear models \cite{fan2003kernel, wu2014integrative}, and {statistical and computational analysis of such interactions has been shown to be highly nontrivial}. To the best of our knowledge, there has been no interaction analysis with partially functional models that consist of two distinct types of covariate effects.

For the estimation of functional models, both {mean} and  {quantile regression} methods have been developed, accommodating ``regular" and long-tailed error distributions. In this article, we consider data with long-tailed errors and {quantile} estimation, which can be technically more challenging than {mean} estimation. In the existing (both {quantile and mean regression}) studies, it is commonly assumed that the functional covariate effects are smooth. Without additional assumptions/constraints, the estimates are nonzero everywhere. In the past few years, there has been a strong advocacy on locally sparse estimation. Under such an estimation, there exist continual subregions, on which the estimates are exactly zero. In terms of both concept and statistical techniques, this has a strong tie with the sparse estimation for parametric covariate effects. It has been argued that sparse estimation in general can be more interpretable and more reliable. Sparse estimation is ``naturally equivalent to’’ variable/model selection, for which regularization especially penalization techniques have been extensively developed in the past decades. Examples of penalized sparse estimation for functionals include \cite{james2009functional, lin2017locally, zhou2013functional}. 

If there are no interactions in the model, conceptually, some of the existing penalized sparse methods for functionals can be adapted to the partially functional models, although we note that there has been very limited research in this aspect \citep{kong2016partially, ma2019quantile, yao2017regularized}. When interactions are present, however, these methods may lead to a violation of the ``main effect, interaction’’ variable selection hierarchy. This hierarchy has been strongly stressed in the recent parametric interaction analysis studies. Under this hierarchy, if an interaction effect is identified, then one or both of the corresponding main effects have to be identified, corresponding to the weak and strong hierarchy, respectively. It has been argued that in interaction analysis, this hierarchy is statistically sensible and necessary. For the specific model we are interested in, this hierarchy means that, for any specific subregion, if a functional effect is nonzero in an interaction term, then the corresponding main functional effect must be nonzero in this subregion. This brings additional constraints and complexity to estimation. To the best of our knowledge, there is no existing estimation technique that can respect this hierarchy in estimation for our proposed model.

This study may complement and advance the existing literature in multiple important ways. First, a novel model is developed, which can accommodate not only two distinct types of covariate effects but more importantly their interactions. Such extensions are natural and strongly motivated by practical data analysis. This model includes multiple existing models as special cases. Second, we consider {quantile} estimation, which is also motivated by many practical data settings and can be more challenging than {mean} estimation. It is noted that the proposed model and penalized estimation can also be coupled with {mean squares loss function.} Third, locally sparse estimation is conducted, which can lead to more interpretable and more reliable results than those without sparsity. Fourth, as a major advancement, we develop an estimation approach that respects the ``main effect, interaction’’ hierarchy, making this study more aligned with parametric interaction analysis. Last but not least, this study delivers a useful tool for data considered in Section 4 and those alike. Overall, with the significant statistical developments and strong application potential, this study is warranted beyond the existing literature.

The rest of the article is organized as follows. In Section 2, we first describe the data setting, proposed model, and estimation approach. An effective computational algorithm is developed, and statistical properties are then rigorously established. Practical performance of the proposed approach is examined using simulation (Section 3) and data analysis (Section 4). The article concludes with brief discussions in Section 5. Additional theoretical developments and numerical results are presented in the Appendix {and Supplemental Materials}.

\section{Methods}

\subsection{Data and model settings}
Consider a random sample  of size $n$: $\{X_i(t), \bm{z}_i, y_i\}_{i = 1}^n$, where $X_i(t)$ is a functional covariate, $\bm{z}_i =(z_{i1}, \cdots, z_{iq})^\top$ is a $q$-dimensional vector of scalar covariates, and $y_i$ is a scalar response. The proposed model, estimation approach, and statistical and computational properties can be easily extended to data with multiple functional covariates. Assume that $X_i(t), i= 1,\cdots,n$ are independent realizations of an unknown smooth and square-integrable function $X(t)$ on the domain $[0,T]$. Without loss of generality, assume that the functional covariate, scalar covariates, and scalar response have been centered to mean zero.

Consider the partially functional interaction model:
\begin{equation}
\label{eq:1}
y_i=\int_{0}^{T} X_{i}(t) \beta_{0}^*(t) d t+\sum_{k=1}^{q} z_{i k} \int_{0}^{T} X_{i}(t) \beta_{k}^*(t) d t+\sum_{k=1}^{q} z_{i k} \gamma_{k}^*+ \epsilon_i,
\end{equation}
where $\beta_{k}^*(t)$'s {for $k = 0,1,\cdots,q$} are smooth and square-integrable coefficient functions, ${\gamma}_{k}^*$’s are scalar coefficients of $\bm{z}_i$, and the error terms $\epsilon_i$’s are independent of $(X_i(t), \bm{z}_i)$ and satisfy $\text{Pr}(\epsilon_i \leq 0 | X_i(t), \bm{z}_i) = \tau$ for $\tau \in (0,1)$. Note that this assumption accommodates long-tailed error distributions.

As described above, we consider the setting with local sparsity. Take $\beta_{0}^*(t)$ as an example. We say that $\beta_{0}^*(t)$ is locally sparse if there exists a subregion $\mathcal{I} \subset[0, T]$, and $\beta_{0}^*(t)=0$ for all $t\in \mathcal{I}$. Accordingly, $X(t)$ has no contribution to the response for $t\in \mathcal{I}$. Here, we note that there can be more than one region with zero effects, and the region location information is not known a priori. In addition, the proposed approach is flexible enough to also accommodate the case with functional covariate effects being nonzero everywhere. Local sparsity can assist in distinguishing regions with and without effects. And it is easy to see the natural connection with variable selection for parametric models.

The proposed model is more complicated than some existing alternatives as local sparsity may apply to both the main effect and interactions. With the connection between local sparsity and variable selection, we naturally encounter the ``main effect, interaction’’ variable selection hierarchy. In our analysis, we are not interested in the sparsity in $\gamma^*_k$’s (although we note that extending to accommodate potential sparsity in $\gamma^*_k$’s is relatively easy with the parametric nature). As such, the hierarchy boils down to the relationship between the subregions of $\beta_k^*$’s $(k=1, \ldots, q)$ with zero/nonzero effects and those of $\beta_0^*$. More specifically, we say that the hierarchy is satisfied if and only if for any subregion $\mathcal{I} \subset[0, T]$, if $\beta_{0}^*(t) = 0$ for all $t\in \mathcal{I}$, then $\beta_{k}^*(t) = 0$. We note that a more rigorous definition should rule out any measure zero set.

\subsection{Estimation}

For estimating the unknown parameters, we propose minimizing the penalized objective function:
\begin{equation}
\label{eq:2}
\begin{aligned}
Q(\bm{\beta}(t), \bm{\gamma}) = & \frac{1}{n}\sum_{i=1}^n \rho_\tau \left(y_i -\int_{0}^{T} X_{i}(t) \beta_{0}(t) d t - \sum_{k=1}^{q} z_{i k} \int_{0}^{T} X_{i}(t) \beta_{k}(t) d t - \sum_{k=1}^{q} z_{i k} \gamma_{k} \right)\\ 
&+ \sum_{k = 1}^q \frac{\kappa}{T}  \int_0^T p_{\lambda_1}(|\beta_{k}(t)|) dt +  \frac{\kappa}{T} \int_0^T p_{\lambda_2} (\| \bm{\beta}(t) \|_2) dt + \eta\sum_{k=0}^q \int_0^T \beta^{\prime\prime2}_{k}(t) dt,
\end{aligned}
\end{equation}
where  $\rho_\tau(u) = u(\tau - I(u<0))$ is the {quantile loss} function, $\| \bm{\beta}(t) \|_2 = (\sum_{k=0}^q \beta_k^{2}(t))^{1/2}$,  $p_{\lambda_j}(\cdot)$’s are penalty functions with tuning parameters $\lambda_j$’s (for $j=1, 2$), $\kappa$ is a modifier and will be discussed below, $\eta$ is a tuning parameter, and $\beta_{k}^{\prime\prime}(t)$ is the second-order derivative of $\beta_{k}(t)$ with respect to $t$.
Various penalty functions can be adopted here, and $p_{\lambda_1}(\cdot)$ and $p_{\lambda_2}(\cdot)$ do not need to be the same. In our theoretical and numerical developments, we adopt MCP \cite{zhang2010nearly} for both $p_{\lambda_1}(\cdot)$ and $p_{\lambda_2}(\cdot)$, where $p_{\lambda_j}(t) = \lambda_j \int_0^{|t|} \left(1-\frac{x}{\lambda_j\xi} \right)_+dt$, $\lambda_j \geq 0$, and $\xi > 0$ is a regularization parameter. It is expected that, with SCAD and some other penalties, properties will be similar.

In (\ref{eq:2}), the first term is a ``standard’’ lack-of-fit based on the quantile technique. Under the smoothness assumption, the last penalty on derivative has been routinely adopted. Here we note that a stronger smoothness assumption and correspondingly a higher order derivative can also be adopted. The most significant and innovative advancement is the first and second penalty terms. In ``ordinary’’ locally sparse estimation, penalties similar to $\int_0^T p_{\lambda_1}(|\beta_{k}(t)|) dt$ have been adopted \cite{lin2017locally}. In our estimation, new challenges are brought by the hierarchy. Motivated by the sparse group penalization for parametric models \cite{liu2013sparse}, we treat $(\beta_0, \beta_1, \ldots, \beta_q)$ as a ``group’’. For a subregion, $\frac{\kappa}{T} \int_0^T p_{\lambda_2} (\| \bm{\beta}(t) \|_2) dt$ determines whether this group of functionals has overall zero effect. If not, then $\sum_{k = 1}^q \frac{\kappa}{T}  \int_0^T p_{\lambda_1}(|\beta_{k}(t)|) dt$ determines which of the $q$ interaction effects are nonzero. Note that here no penalty is applied to $\beta_0(t)$, ensuring that the corresponding estimate is nonzero, and hence the hierarchy is guaranteed.

Directly optimizing (\ref{eq:2}) is challenging with the infinite dimension of the unknown functionals. Here we adopt a popular B-spline expansion-based technique, {which can be preferred with its compact support property, computational efficiency, and satisfactory performance with capturing local sparsity}. Denote $\mathcal{H}_{dM_n}$ as the linear space spanned by a set of order $d + 1$ B-spline basis functions $B_1(t), \cdots , B_{M_n+d}(t)$, each with  $M_n + 1$ equally spaced knots $0 = t_0 < t_1 < \cdots < t_{M_n} = T$ in the domain $[0,T]$. In (\ref{eq:2}), second-order derivatives are taken, corresponding to $d=2$. We refer to \cite{de1978practical} for the construction of B-spline basis functions and related. 
Denote $\bm{B}(t) = (B_1(t), \cdots, B_{M_n+d} (t))^\top$. 
Then we parameterize coefficient functions $\beta_k(t)=\bm{B}(t)^\top \bm{b}_k$ for $k=0, \cdots, q$, where $\bm{b}_k = (b_{k,1}, \cdots, b_{k,M_n+d})^\top$. Let $\bm{Z}=(\bm{z}_1, \cdots, \bm{z}_n)^\top$, $\bm{X}=\left(\bm{x}_{1}, \ldots, \bm{x}_{n}\right)^{\top}$ be the $n \times(M_{n}+d)$ matrix with the $(i,j)$th entry being $x_{i j}=\int_{0}^{T} X_{i}(t) B_{j}(t) d t$, and $\bm{U} =\left(\bm{u}_{1}, \ldots, \bm{u}_{n}\right)^{\top}$ be the $n\times q(M_n+d)$ matrix with $\bm{u}_{i}= \bm{z}_{i} \otimes \bm{x}_{i}$, where $\otimes$ is the Kronecker product. Further denote $\bm{\Psi} = (\bm{X}, \bm{U})$, which is 
$n \times q_n$ with $q_n = (q+1)\times(M_n+d)$.

The first term of (\ref{eq:2}) can be rewritten as: 
\begin{equation}
\label{eq:3}
\frac{1}{n} \sum_{i = 1}^n \rho_\tau (y_i - \bm{\psi}_i^\top \bm{b} -\bm{z}_i^\top \bm{\gamma}),
\end{equation}
where $\bm{b} = (\bm{b}_{0}^\top, \cdots, \bm{b}_{q}^\top)^\top$ and $\bm{\gamma} = (\gamma_1, \cdots, \gamma_q)^\top$.
In Lemma \ref{lemma:1}  (Appendix), we examine approximating the sparse group penalty under this basis expansion. In particular, setting the modifier $\kappa = M_n$, we  have:
\begin{equation}
\begin{aligned}
\label{eq:4}
&\sum_{k = 1}^q \frac{M_n}{T} \int_{0}^{T} p_{\lambda_1}(|\beta_{k}(t)|) dt + \frac{M_n}{T} \int_{0}^{T} p_{\lambda_2}(\|\bm{\beta}(t)\|_2) dt \\
&~~~~~ \approx \sum_{k = 1}^q \sum_{l=1}^{M_{n}} p_{\lambda_1}\left( \| \bm{b}_{k}\|_{\bm{W}_l}\right) + \sum_{l=1}^{M_{n}} p_{\lambda_2}\left( \| \bm{b}\|_{\bm{W}_l}\right) ,
 \end{aligned}
\end{equation}
where $\bm{W}_{l}$ is the $\left(M_{n}+d\right) \times\left(M_{n}+d\right)$ matrix with the $(i,j)$th entry $w_{lij}= \frac{M_n}{T}\int_{t_{l-1}}^{t_{l}} B_{i}(t) B_{j}(t) d t$ if $l \leq$ $i, j \leq l+d$, and $w_{lij}=0$ otherwise, 
$\| \bm{b}_{k}\|_{\bm{W}_l} = ( \bm{b}_{k}^{\top} \bm{W}_{l} \bm{b}_{k})^{1/2}$, and $\|\bm{b} \|_{\bm{W}_l} = ({ \sum_{k=0}^q \bm{b}_{k}^{\top} \bm{W}_{l} \bm{b}_{k}})^{1/2}$. Let $\bm{V}$ be the $(M_n+d) \times (M_n + d)$ matrix with the $(i,j)$th entry $v_{i j}=\int_{0}^{T} \frac{d^{2} B_{i}(t)}{d t^{2}} \frac{d^{2} B_{j}(t)}{d t^{2}} d t$.  Then,
\begin{equation}
\label{eq:5}
 \sum_{k = 0}^q \int_{0}^{T} \beta_{k}^{\prime\prime2}(t) dt = \sum_{k = 0}^q  \bm{b}_{k}^{\top} \bm{V} \bm{b}_{k}.
\end{equation}

With (\ref{eq:3}), (\ref{eq:4}) and (\ref{eq:5}), we propose estimating $(\bm{b}^*, \bm{\gamma}^*)$ by minimizing the following objective function: 
\begin{equation}
\label{eq:6}
\begin{aligned}
Q(\bm{b}, \bm{\gamma})=& \frac{1}{n} \sum_{i=1}^n \rho_\tau \left(y_i - \bm{\psi}_i^\top \bm{b} -\bm{z}_i^\top \bm{\gamma} \right)+\sum_{k=1}^{q} \sum_{l=1}^{M_{n}} p_{\lambda_{1}}\left(  \|\bm{b}_k \|_{\bm{W}_l}\right) +\sum_{{l}=1}^{M_{n}} p_{\lambda_{2}}\left( \|\bm{b} \|_{\bm{W}_l} \right)   + \eta \sum_{k=0}^{q} \bm{b}_{k}^{\top} \bm{V} \bm{b}_{k}.
\end{aligned}
\end{equation}
Denote $(\hat{\bm{b}}, \hat{\bm{\gamma}})$ as the minimizer. Then the estimate of $\beta_k^*(t)$ is $\hat{\beta}_k(t) = \bm{B}^\top(t) \hat{\bm{b}}_k$.

\subsection{Computation}

To accommodate the non-differentiable quantile loss function, we resort to the majorize-minimization (MM) technique. In addition, we adopt the local quadratic approximation (LQA) technique for the sparse group penalty. 

The proposed algorithm is iterative. At the $(m+1)$th iteration, with estimate $\bm{b}_k^{(m)}$ from the $m$th iteration, we have:
\begin{equation*}
\begin{aligned}
p_{\lambda_1} \left( \|\bm{b}_k\|_{\bm{W}_l} \right)
&\approx p_{\lambda_1} ( \|\bm{b}_k^{(m)}\|_{\bm{W}_l}) +  \frac{1}{2} \frac{p_{\lambda_1}^{\prime}( \|\bm{b}_k^{(m)}\|_{\bm{W}_l})}{\|\bm{b}_k^{(m)}\|_{\bm{W}_l}}(\|\bm{b}_k\|_{\bm{W}_l}^2 - \|\bm{b}_k^{(m)}\|_{\bm{W}_l}^2)\\
& = \frac{1}{2} \frac{p_{\lambda_1}^{\prime}( \|\bm{b}_k^{(m)}\|_{\bm{W}_l})}{\|\bm{b}_k^{(m)}\|_{\bm{W}_l}}\|\bm{b}_k\|_{\bm{W}_l}^2 + G_0(\bm{b}_k^{(m)}),
\end{aligned}
\end{equation*}
where $p^\prime_{\lambda_1}(t) = \lambda_1 (1-|t|/(\lambda_1\xi))_+$ is the first-order derivative of $p_{\lambda_1}(t)$, and $G_0(\bm{b}_k^{(m)})$ is a function of $\bm{b}_k^{(m)}$ and does not depend on $\bm{b}_k$. As such, we can obtain the LQA approximation of the sparse group penalty as:
\begin{equation}
\label{eq:7}
\begin{aligned}
 &\sum_{k=1}^q \sum_{l=1}^{M_n} p_{\lambda_1} \left( \|\bm{b}_k\|_{\bm{W}_l} \right) + \sum_{l=1}^{M_n} p_{\lambda_2} \left( \|\bm{b}\|_{\bm{W}_l}\right)  \approx \bm{b}^\top \breve{\bm{W}}^{(m)}\bm{b} +  G_1(\bm{b}^{(m)}),
 \end{aligned}
\end{equation}
where $\breve{\bm{W}}^{(m)} = \text{diag}(\breve{\bm{W}}^{(m)}_0, \cdots, \breve{\bm{W}}^{(m)}_q)$ is the $q_n \times q_n$ block diagonal matrix with:
$$
\breve{\bm{W}}^{(m)}_0 = \frac{1}{2} \sum_{l=1}^{M_n} \frac{p_{\lambda_2}^{\prime}\left(\|\bm{b}^{(m)}\|_{\bm{W}_l} \right)}{\|\bm{b}^{(m)}\|_{\bm{W}_l}} \bm{W}_{l},
$$ 
and 
$$
\breve{\bm{W}}^{(m)}_k = \frac{1}{2} \sum_{l=1}^{M_n} \left(\frac{p_{\lambda_1}^{\prime}(\|\bm{b}_k^{(m)}\|_{\bm{W}_l} )}{\|\bm{b}_k^{(m)}\|_{\bm{W}_l}}+  \frac{p_{\lambda_2}^{\prime}\left(\|\bm{b}^{(m)}\|_{\bm{W}_l} \right)}{\|\bm{b}^{(m)}\|_{\bm{W}_l}} \right) \bm{W}_{l}, 
$$
for $k = 1,\cdots, q$, and $G_1(\bm{b}^{(m)})$ is free of $\bm{b}$.

Let $\bm{\Phi}$ be the $n\times d_n$ matrix with the $i$th row being $\bm{\phi}_i = (\bm{x}_i^\top, \bm{u}_i^\top, \bm{z}_i^\top)^\top \in \mathcal{R}^{d_n}$ and $d_n = q_n + q$. Denote the coefficient vector as $\bm{\omega} = (\bm{b}^\top, \bm{\gamma}^\top)^\top \in \mathcal{R}^{d_n}$. Let $\tilde{\bm{V}} = \text{diag} (\bm{V},\cdots, \bm{V}, \bm{0}_q)$ and  $\tilde{\bm{W}}^{(m)}=\text{diag}(\breve{\bm{W}}^{(m)}, \bm{0}_{q})$ be $d_n \times d_n$ block-diagonal matrices, where $\bm{0}_{q}$ is the $q \times q$ matrix with all entries being 0.

With the MM algorithm, at the $(m+1)$th iteration, given the residual value $\bm{r}^{(m)} = \bm{y}-\bm{\Phi} \bm{\omega}^{(m)}$, the quantile loss is majorized at $\bm{r}^{(m)}=(r_1^{(m)}, \cdots, r_n^{(m)})^\top$ by the quadratic function:
$$
\xi (\bm{r} | \bm{r}^{(m)} )=\frac{1}{n} \sum_{i=1}^{n} \frac{1}{4}\left(\frac{r_{i}^{2}}{\varrho+|r_{i}^{{(m)}}|}+(4 \tau-2) r_i+c\right),
$$
where $\bm{r} = (r_1, \cdots, r_n)^\top$, $\varrho$ is a small perturbation, and $c$ is a constant. 

The overall objective function at the $(m+1)$th iteration is: 
$
\tilde{Q} (\bm{\omega} | \bm{\omega}^{(m)})= \xi (\bm{r} | \bm{r}^{(m)}) +\bm{\omega}^{\top} \tilde{\bm{W}}^{(m)} \tilde{\bm{ \omega}}+\eta \bm{\omega}^{\top} \tilde{\bm{V}} \bm{\omega}.
$
The first-order derivative of $\tilde{Q} (\bm{\omega} | \bm{\omega}^{(m)})$ with respect to $\bm{\omega}$ is:
\begin{equation*}
\begin{aligned}
\tilde{Q}^\prime (\bm{\omega} | \bm{\omega}^{(m)}) & =\frac{1}{2 n} \sum_{i=1}^{n} \bm{\phi}_{i}\left(1-2 \tau-{r_{i}}/{(\varrho+|r_{i}^{(m)}|)}\right)+2 \tilde{\bm{W}}^{(m)} \bm{\omega}+2 \eta \tilde{\bm{V}} \bm{\omega} \\
& =\frac{1}{2 n} \bm{\Phi}^{\top} v_{\varrho}(\bm{\omega} | \bm{\omega}^{(m)})+2 \tilde{\bm{W}}^{(m)} \bm{\omega}+2 \eta \tilde{\bm{V}} \bm{\omega},
\end{aligned}
\end{equation*}
where 
$$ v_{\varrho}(\bm{\omega}| \bm{\omega}^{(m)}) = \left(1-2 \tau-\frac{r_{1}}{\varrho+|r_{1}^{(m)}|}, \cdots, 1-2 \tau-\frac{r_{n}}{\varrho+|r_{n}^{(m)}|} \right)^\top$$ 
is a length-$n$ column vector.
The second-order derivative of $\tilde{Q} (\bm{\omega} | \bm{\omega}^{(m)})$ with respect to $\bm{\omega}$ is:
\begin{equation*}
\begin{aligned}
\tilde{Q}^{\prime \prime} (\bm{\omega} | \bm{\omega}^{(m)}) =\frac{1}{2 n} \sum_{i=1}^{n} \frac{\bm{\phi}_{i} \bm{\phi}_{i}^{\top}}{\varrho+| r^{(m)}_{i}|}+2 \tilde{\bm{W}}^{(m)}+2 \eta \tilde{\bm{V}} =\frac{1}{2 n} \bm{\Phi}^{\top} \bm{R}^{(m)} \bm{\Phi}+2 \tilde{\bm{W}}^{(m)}+2 \eta \tilde{\bm{V}},
\end{aligned}
\end{equation*}
where 
$$\bm{R}^{(m)}=\text{diag}\left(\frac{1}{\varrho+| r_{1}^{(m)}|}, \cdots, \frac{1}{\varrho+| r_{n}^{(m)}|} \right)$$
is an $n \times n$ diagonal matrix. Then the Gauss-Newton step direction is: 
\begin{eqnarray}
\label{eq:9}
&&\Delta_{\varrho}^{(m)}( \bm{\omega}|\bm{\omega}^{(m)})  =-\left[ \tilde{Q}^{\prime \prime}(\bm{\omega} | \bm{\omega}^{(m)})\right]^{-1}{\tilde{Q}^{\prime} (\bm{\omega} | \bm{\omega}^{(m)})}\nonumber\\
&=&-\left(\bm{\Phi}^{\top} \bm{R}^{(m)} \bm{\Phi}+4 n \tilde{\bm{W}}^{(m)}+4 n \eta \tilde{\bm{V}}\right)^{-1}\left(\bm{\Phi}^{\top} v_{\varrho}(\bm{\omega} | \bm{\omega}^{(m)} )+4 n \tilde{\bm{W}}^{(m)} \bm{\omega}+4 n \eta \tilde{\bm{V}} \bm{\omega}\right).
\end{eqnarray}

Overall, the proposed computational algorithm proceeds as follows:
\begin{enumerate}[Step 1.]
 \item Initialize $\hat{\bm{\omega}}$ as $\hat{\bm{\omega}}^{(0)} = (\bm{\Phi}^\top\bm{\Phi} + n\eta \tilde{\bm{V}})^{-1} \bm{\Phi}^\top \bm{y}$ and $m = 0$.
\item Given $\hat{\bm{\omega}}^{(m)}$, compute  $\tilde{\bm{W}}^{(m)}$, $\bm{R}^{(m)}$, and $v_\varrho(\hat{\bm{\omega}}^{(m)})$. Update:
$$\hat{\bm{\omega}}^{(m+1)} = \hat{\bm{\omega}}^{(m)} + \Delta_\varrho^{(m)} (\hat{\bm{\omega}}^{(m)}), \text{ and } m = m+1.$$
\item Repeat 2 until convergence, which is concluded if the norm of the difference between the estimates from two consecutive iterations is smaller than a prespecified cutoff.  The final estimate of $\bm{\omega}$ is obtained by further setting the elements of $\hat{\bm{\omega}}$ with absolute values smaller than a prespecified threshold to zero.
\end{enumerate}

This algorithm is built on the MM and LQA techniques, both of which have been well examined in published literature. Convergence of the algorithm can be established following the literature and is achieved in all of our numerical studies. With the LQA, finite iterations cannot lead to sparse estimation. Following published studies, a cutoff (whose value is not crucial) is imposed in Step 3. In our numerical study, we use $10^{-3}$. As in the literature, the value of $M_n$ is also not crucial since the smoothness of estimation is controlled by the roughness penalty, as opposed to the number of knots.  In our numerical study, {
we use cubic B-splines with 71 equally spaced knots to estimate $\bm{\beta}(t)$'s, following \cite{lin2017locally}. Note that other knot placement strategies can be considered such as some data-driven methods putting knots at certain quantiles of covariates.} Following \citep{shi2014penalized, wu2020structured}, we set $\lambda_2 = \sqrt{q+1} \lambda_1$ and $\xi = 6$ and perform a grid search for the optimal $(\eta, \lambda_1)$ based on prediction performance. More details are provided in the numerical studies below. {The R code implementing the proposed algorithm is publicly available at  https://github.com/weijuanliang12138/SHLoS-R-Code.}

\subsection{Theoretical properties}
Let $f_i(\cdot)$ and $F_i(\cdot)$ be the probability density function and distribution function of $\epsilon_i$ given $(X_i(t), \bm{z}_i)$, respectively. Denote $\bm{B}_n= \mbox{diag}\{f_1(0), \cdots, f_n(0)\}$. We assume the following conditions.

\begin{condition}
For $i = 1, \cdots, n$, in a neighborhood of zero, $f_i$ is continuous and satisfies $0 < c \leq  f_i \leq C < \infty$, where $c$ and $C$ are constants. In addition, the first-order derivative $f_i^\prime$ has a uniform upper bound.
\end{condition}

\begin{condition}
For $k = 0, \cdots, q$, $\beta_k^*(t)$ belongs to the $H\ddot{o}lder$ space $C^{\alpha, \nu}([0,1])$. Specifically,  $|\beta^{*(\alpha)}_{k}(x_1) - \beta^{*(\alpha)}_{k}(x_2)| \leq C_1|x_1 - x_2|^\nu$ for a constant $C_1$, positive integer $\alpha$, and $\nu\in(0, 1]$, and for all $0\leq x_1, x_2 \leq 1$, where $\beta^{*(\alpha)}_k(\cdot)$ is the $\alpha$th-order derivative of $\beta^{*}_k(\cdot)$. Let $r = a+\nu$. Assume $r > 1.5$.
\end{condition}

\begin{condition}
There exist positive constants $C_2$ and $C_3$, such that $({ \int_0^T X(t)^2 dt})^{1/2}  \leq C_2 < \infty$ and $ |{z}_{k}|  \leq C_3 < \infty$ for $k = 1, \cdots, q$. In addition, there exist  positive constants $C_4$, $C_5$, $C_6$, and $C_7$, such that
$$  C_4 M_n^{-1} \leq \lambda_{\min}(n^{-1} \bm{\Psi} \bm{\Psi}^\top ) \leq  \lambda_{\max}(n^{-1} \bm{\Psi} \bm{\Psi}^\top)\leq  C_5 M_n^{-1}, $$
 $$  C_6 \leq   \lambda_{\min} (n^{-1} \check{\bm{Z}}\check{\bm{Z}}^\top) \leq \lambda_{\max} (n^{-1} \check{\bm{Z}}\check{\bm{Z}}^\top)  \leq C_7,$$
 where $\check{\bm{Z}} = (\bm{I}_n - \bm{\Psi}(\bm{\Psi}^\top \bm{B}_n \bm{\Psi})^{-1}\bm{\Psi}^\top \bm{B}_n ) \bm{Z} $.  

\end{condition}

\begin{condition}
$M_n= O(n^{\frac{1}{2r+1}})$.
\end{condition}

Condition 1 is common in the quantile regression literature and weaker than those assumed with {mean} estimations. Condition 2 ensures that there exists $\bm{b}_k^*\in\mathcal{R}^{M_n+d}$ such that $\sup_{t\in[0,T]} |\beta_k^*(t)-\bm{B}(t)^\top \bm{b}_k^*| = O(M_n^{-r})$ for $k=0,\cdots, q$ \cite{schumaker2007spline}.  Condition 3 is on the covariates and design matrices, which is analogous to those in \cite{zhou2013functional} and \cite{sherwood2016partially}.  Condition 4 is also common in the spline literature.  

Denote the null region of $\beta_k^*(t)$ as $\mathcal{N}_k=\{t \in [0,T]:\beta_k^*(t)=0 \}$. The asymptotic properties of the proposed estimator can be summarized as follows. 

\begin{theorem}
Under Conditions 1-4, if  $n^{-\frac{r}{2r+1}}/\min(\lambda_1, \lambda_2) = o(1)$, $\max(\lambda_1, \lambda_2) = o(1)$ and $\eta = o(n^{-1/2})$, then there exists a local minimizer $(\hat{\bm{b}}, \hat{\bm{\gamma}})$ of (\ref{eq:6}), such that for all $k = 0, \cdots, q$ with $\hat{\beta}_k(t) = \bm{B}^\top(t) \hat{\bm{b}}_k$,  \\
(1) $\int_0^T(\hat{\beta}_k(t) - \beta_k^*(t))^2 dt = O_p(n^{-2r/(2r+1)})$   and  $\|\hat{\bm{\gamma}} - \bm{\gamma}^* \|_2 = O_p(n^{-1/2})$,\\
(2)  $\hat{\beta}_k(t) = 0$ for all $t\in \mathcal{N}_k$ with probability tending to one.
\end{theorem}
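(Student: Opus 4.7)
The plan is to reduce the infinite-dimensional problem to a finite-dimensional one in the B-spline coefficients, then (1) produce a local minimizer within a shrinking ball around the best spline approximation via Knight's identity, and (2) use the KKT subgradient conditions together with the non-shrinking property of MCP outside its threshold to establish local sparsity.

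Let $\bm{b}^{*}=(\bm{b}_0^{*\top},\ldots,\bm{b}_q^{*\top})^\top$ collect the coefficients of the best uniform spline approximations to the $\beta_k^*$'s, so that Condition 2 and \cite{schumaker2007spline} give $\sup_t|\beta_k^*(t)-\bm{B}(t)^\top\bm{b}_k^*|=O(M_n^{-r})$. For claim (1) I will show that for every $\varepsilon>0$ there exists $C>0$ with
\begin{equation*}
\Pr\!\left(\inf_{\|\bm{u}\|_2=C} Q\!\bigl(\bm{b}^*+a_n\bm{u}_b,\,\bm{\gamma}^*+b_n\bm{u}_\gamma\bigr) > Q(\bm{b}^*,\bm{\gamma}^*)\right)\ge 1-\varepsilon,
\end{equation*}
where $a_n=\sqrt{M_n}\,n^{-r/(2r+1)}$ and $b_n=n^{-1/2}$; this produces a local minimizer inside the ball and, through the B-spline Gram-matrix identity $\int_0^T(\bm{B}^\top\bm{a})^2\,dt\asymp M_n^{-1}\|\bm{a}\|_2^2$ (implicit in Condition 3), delivers both stated rates. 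The inequality is proved by applying Knight's identity $\rho_\tau(e-v)-\rho_\tau(e)=-v\{\tau-I(e<0)\}+\int_0^v\{I(e\le s)-I(e\le 0)\}\,ds$ to each summand of the quantile loss: the linear-in-$\bm{u}$ part is mean-zero and $O_p(\|\bm{u}\|_2)$ by the variance bounds in Condition 3, while the integral part, after conditioning on the covariates and using the uniform density bound of Condition 1, yields a quadratic form in $\bm{u}$ with eigenvalues bounded below. The approximation bias, the roughness penalty ($\eta=o(n^{-1/2})$), and the sparse-group penalty differences ($\max(\lambda_1,\lambda_2)=o(1)$) all sit at lower order, so for sufficiently large $C$ the positive quadratic dominates.

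For claim (2) I fix $k$ and a subinterval $[t_{l-1},t_l]\subset\mathcal{N}_k$ and argue by contradiction. If the local minimizer from (1) had $\|\hat{\bm{b}}_k\|_{\bm{W}_l}>0$ (or $\|\hat{\bm{b}}\|_{\bm{W}_l}>0$ in the $k=0$ case, where by the hierarchy the whole group is approximately zero on $\mathcal{N}_0$), the first-order subgradient of (\ref{eq:6}) with respect to the block of basis coefficients supported on $[t_{l-1},t_l]$ would have to vanish. By a Bahadur-type expansion of the Knight decomposition, the data-fit gradient at $(\hat{\bm{b}},\hat{\bm{\gamma}})$ is of order $n^{-r/(2r+1)}$ and the roughness gradient is $o(n^{-1/2})$; yet the MCP derivative $p'_{\lambda_j}(\cdot)$ stays at least $\min(\lambda_1,\lambda_2)/2$ for magnitudes below $\lambda_j\xi/2$, and $n^{-r/(2r+1)}/\min(\lambda_1,\lambda_2)=o(1)$ by hypothesis. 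The resulting contradiction forces the block to vanish; local support of the B-splines turns this into $\hat\beta_k(t)=0$ on $[t_{l-1},t_l]$, and a union bound over the $O(M_n)$ intervals and over $k$ completes the argument. The main obstacle is carrying the Knight expansion through while simultaneously preserving the parametric $n^{-1/2}$ rate for $\hat{\bm{\gamma}}$ and the nonparametric $\sqrt{M_n/n}$ rate for $\hat{\bm{b}}$: the projection $\check{\bm{Z}}=(\bm{I}_n-\bm{\Psi}(\bm{\Psi}^\top\bm{B}_n\bm{\Psi})^{-1}\bm{\Psi}^\top\bm{B}_n)\bm{Z}$ in Condition 3 is designed to decouple these blocks, but verifying that the resulting cross-terms are uniformly negligible against the quadratic lower bound, and that the interaction design $\bm{U}=(\bm{z}_i\otimes\bm{x}_i)$ inherits the required eigenvalue structure, is the delicate technical step.
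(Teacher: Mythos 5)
Your overall strategy for part (1) --- a boundary-infimum argument over a shrinking ball, with Knight's identity splitting the quantile loss into a mean-zero linear term and a positive quadratic term, and the penalties shown to be lower order --- is the same route the paper takes. The genuine gap is the $n^{-1/2}$ rate for $\hat{\bm{\gamma}}$. A joint ball with radii $a_n=\sqrt{M_n}\,n^{-r/(2r+1)}$ for the spline block and $b_n=n^{-1/2}$ for the scalar block does not deliver this as stated: the cross term in the quadratic form between the two blocks is of order $a_nb_n$ times a non-negligible covariance, and since $a_n\gg b_n$ it can swamp the $b_n^2$ diagonal contribution in the $\bm{\gamma}$ direction, so the infimum-on-the-boundary argument only localizes $\hat{\bm{\gamma}}$ at the slower rate $O_p(\sqrt{M_n/n})$. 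You flag this as ``the delicate technical step'' but do not resolve it. The paper resolves it by reparameterizing following Sherwood and Wang: it sets $\bm{\theta}_1=\sqrt{n}(\bm{\gamma}-\bm{\gamma}^*)$ and $\bm{\theta}_2=\bm{\Psi}_B(\bm{b}-\bm{b}^*)+\bm{\Psi}_B^{-1}\bm{\Psi}^\top\bm{B}_n\bm{Z}(\bm{\gamma}-\bm{\gamma}^*)$, which absorbs the cross term into the spline block, first obtains $\|\hat{\bm{\theta}}\|_2=O_p(\sqrt{d_n})$, and then runs a second, separate comparison of $\hat{\bm{\theta}}_1$ against $\tilde{\bm{\theta}}_1=\sqrt{n}(\check{\bm{Z}}^\top\bm{B}_n\check{\bm{Z}})^{-1}\check{\bm{Z}}\mathcal{D}_\tau(\bm{\epsilon})$ (its Lemma 3) to conclude $\|\hat{\bm{\theta}}_1-\tilde{\bm{\theta}}_1\|_2=o_p(1)$ and hence the parametric rate. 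Without this, or an equivalent orthogonalization, your part (1) proves only the first half of claim (1).

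For part (2) you take a different but legitimate route: a KKT/stationarity contradiction at a nonzero block, using that the MCP derivative stays of order $\min(\lambda_1,\lambda_2)$ below its threshold while the restricted score of the fidelity term is $O_p(n^{-r/(2r+1)})$. The paper instead compares the objective at $\hat{\bm{b}}$ with an explicit competitor $\tilde{\bm{b}}$ obtained by zeroing the offending block, bounding the resulting loss increase by the convexity subgradient inequality together with a second-moment bound on the indicator differences; this avoids establishing a full Bahadur-type expansion of the score at the estimator, which is the piece of your argument that still requires the bulk of the empirical-process work (the claimed order of the data-fit gradient at $(\hat{\bm{b}},\hat{\bm{\gamma}})$ is not automatic). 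Your handling of the $k=0$ case through the group penalty and the hierarchy of the true coefficients, and the union bound over the $O(M_n)$ subintervals, match the paper's reasoning.
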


Proof is provided in the Appendix. This theorem establishes the estimation and selection consistency properties. It is observed that the convergence rate of $\hat{\beta}_{k}(t)$ is $n^{-r/(2r+1)}$, which is optimal \cite{stone1985additive}. The convergence rate of $\hat{\bm{\gamma}}$ is free of $M_n$ -- the optimal root-$n$ rate is achieved. The selection consistency holds by result (2). With the design of the penalty, the ``main effect, interaction’’ hierarchy is automatically satisfied.

\section{Simulation}
{Data is generated from the following model:}
\begin{equation}
\label{eq:10}
y_{i}=\int_{0}^{1} X_{i}(t) \beta_0^*(t) d t+\sum_{k=1}^{2} z_{i k} \int_{0}^{1} X_{i}(t) \beta_k^*(t) d t+\sum_{k=1}^{2} z_{i k} \gamma_{k}^*+\epsilon_{ i}.
\end{equation}
{We consider three different scenarios of coefficient functions corresponding to various levels of sparsity and number of null regions. All of these functions in each scenario satisfy the ``main effect, interaction" hierarchy.}

{
Scenario I:  60\% regions of $\beta_{10}^*(t)$ have contribution to the response, and there is a null region in  $\beta_{10}^*(t)$. The functional main effect is:
\begin{equation*}
\beta_{10}^*(t) = 
\left\{
             \begin{array}{ll}
             2(1-t) \sin(2\pi(t+0.2))  &  0\leq t \leq 0.3, \\
             0  & 0.3< t < 0.7, \\
             2t \sin(2\pi(t-0.2)) &  0.7 \leq t \leq1.
             \end{array}
\right.
\end{equation*}
For the functional interaction effects, we consider: (1) $\beta_{11}^*(t) = \beta_{10}^*(t) \text{ for } t \in [0, 0.3]$, and $\beta_{11}^*(t) = 0$ otherwise, (2) $\beta_{12}^* (t) = \beta_{10}^*(t) \text{ for } t \in [0.7, 1]$, and $\beta_{12}^*(t) = 0$ otherwise. These functions are demonstrated in Figure \ref{Fig:1} by black solid lines. }

{
Scenario II: 30\% regions of the main effect are nonnull regions, and there are four null regions on the entire domain of $\beta_{20}^*(t)$. The functional main effect and interactions are defined as:
\begin{equation*}
\beta_{20}^*(t) = 
\left\{
             \begin{array}{ll}
             5\sin(10\pi(t-0.2)) &  0.2< t \leq 0.3, \\
             -3\sin(10\pi(t-0.5)) & 0.5< t \leq 0.6,\\
             3.5\sin(10\pi(t-0.7)) &  0.7< t \leq 0.8,\\
             0  & \text{otherwise},
             \end{array}
\right.
\end{equation*}
\begin{equation*}
\beta_{21}^*(t) = 
\left\{
             \begin{array}{ll}
             2(t-0.25)^2/0.05^2 -2 &  0.2 <t \leq 0.3, \\
             5\sin(10\pi(t-0.5)) & 0.5 < t \leq0.6,\\
              0  & \text{otherwise},
             \end{array}
\right.
\end{equation*}
and
\begin{equation*}
\beta_{22}^*(t) = 
\left\{
             \begin{array}{ll}
             2.5\sin(10\pi(t-0.5)) & 0.5< t \leq 0.6,\\
             4(t-0.75)^2/0.05^2 -4 &  0.7< t \leq 0.8, \\
              0  & \text{otherwise},
             \end{array}
\right.
\end{equation*}
respectively. These functions are presented in Figure \ref{Fig:2} by black solid lines.}

{
Scenario III: 17.5\% regions of $\beta_{30}^*(t)$ have nonzero effects on the response, and there are eight null regions on the entire domain of the main effect. The functional main effect and interactions are defined as:
\begin{equation*}
\beta_{30}^*(t) = 
\left\{
             \begin{array}{ll}
             4(t-0.1375)^2/0.0125^2 -4    & 0.125 < t \leq 0.15,\\
             7\sin(40\pi(t-0.175))      & 0.175 < t \leq 0.2,\\
             -6\sin(40\pi(t-0.325))     & 0.325 < t \leq 0.35,\\
             8\sin(40\pi(t-0.6))          & 0.6     < t \leq 0.625,\\
             -10\sin(40\pi(t-0.7))       & 0.7     < t \leq 0.725,\\
             5\sin(40\pi(t-0.8))          & 0.8     < t \leq 0.825,\\
             -7\sin(40\pi(t-0.875))     & 0.875 < t \leq 0.9,\\
             0  & \text{otherwise},
             \end{array}
\right.
\end{equation*}
\begin{equation*}
\beta_{31}^*(t) = 
\left\{
             \begin{array}{ll}
             10\sin(40\pi(t-0.125))    & 0.125 < t \leq 0.15,\\
             6\sin(40\pi(t-0.325))      & 0.325 < t \leq 0.35,\\ 
             8(t-0.7125)^2/0.0125^2 -8    & 0.7     < t \leq 0.725,\\
             9\sin(40\pi(t-0.875))      & 0.875 < t \leq 0.9,\\
              0  & \text{otherwise},
             \end{array}
\right.
\end{equation*}
and
\begin{equation*}
\beta_{32}^*(t) = 
\left\{
             \begin{array}{ll}
             5\sin(40\pi(t-0.175))                        & 0.175 < t \leq 0.2,\\
             10(t-0.6125)^2/0.0125^2 -10          & 0.6     < t \leq 0.625,\\
             7\sin(40\pi(t-0.8))                            & 0.8     < t \leq 0.825,\\
              0  & \text{otherwise},
             \end{array}
\right.
\end{equation*}
respectively. These functions are demonstrated in Figure \ref{Fig:3} by black solid lines.
}

\begin{figure}[!h]
\centering
\includegraphics[ height=3.8in, width=5.2in]{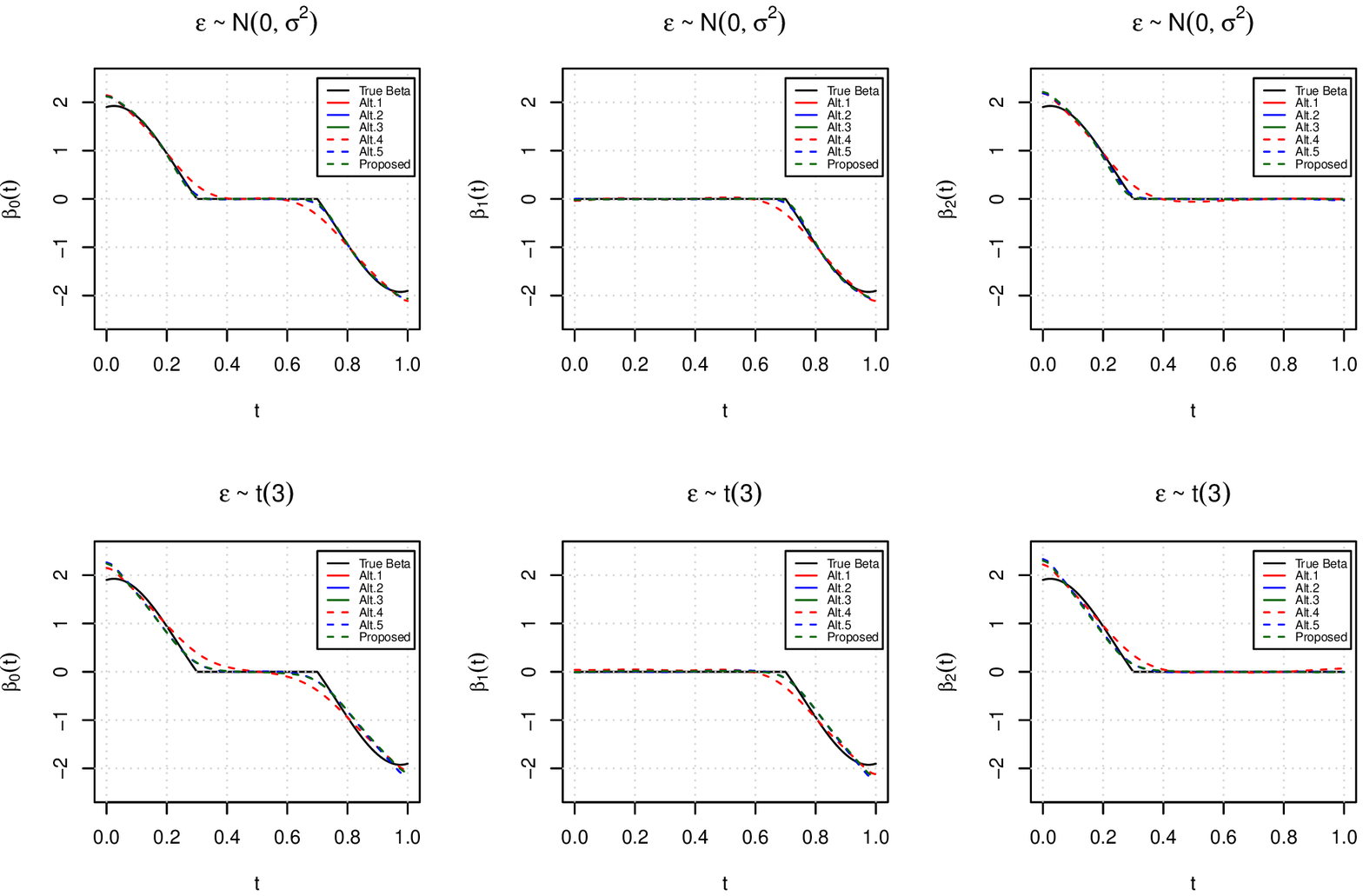}
\caption{{Average of $\hat{\beta}(t)$'s in Scenario I  with $n = 300$ based on 100 replicates under Case 1 (top) and Case 2 (bottom), respectively. Left/middle/right: $\beta_{0}(t)$/$\beta_{1}(t)$/$\beta_2(t)$.}}
\label{Fig:1}
\end{figure}

\begin{figure}[!h]
\centering
\includegraphics[ height=3.8in, width=5.2in]{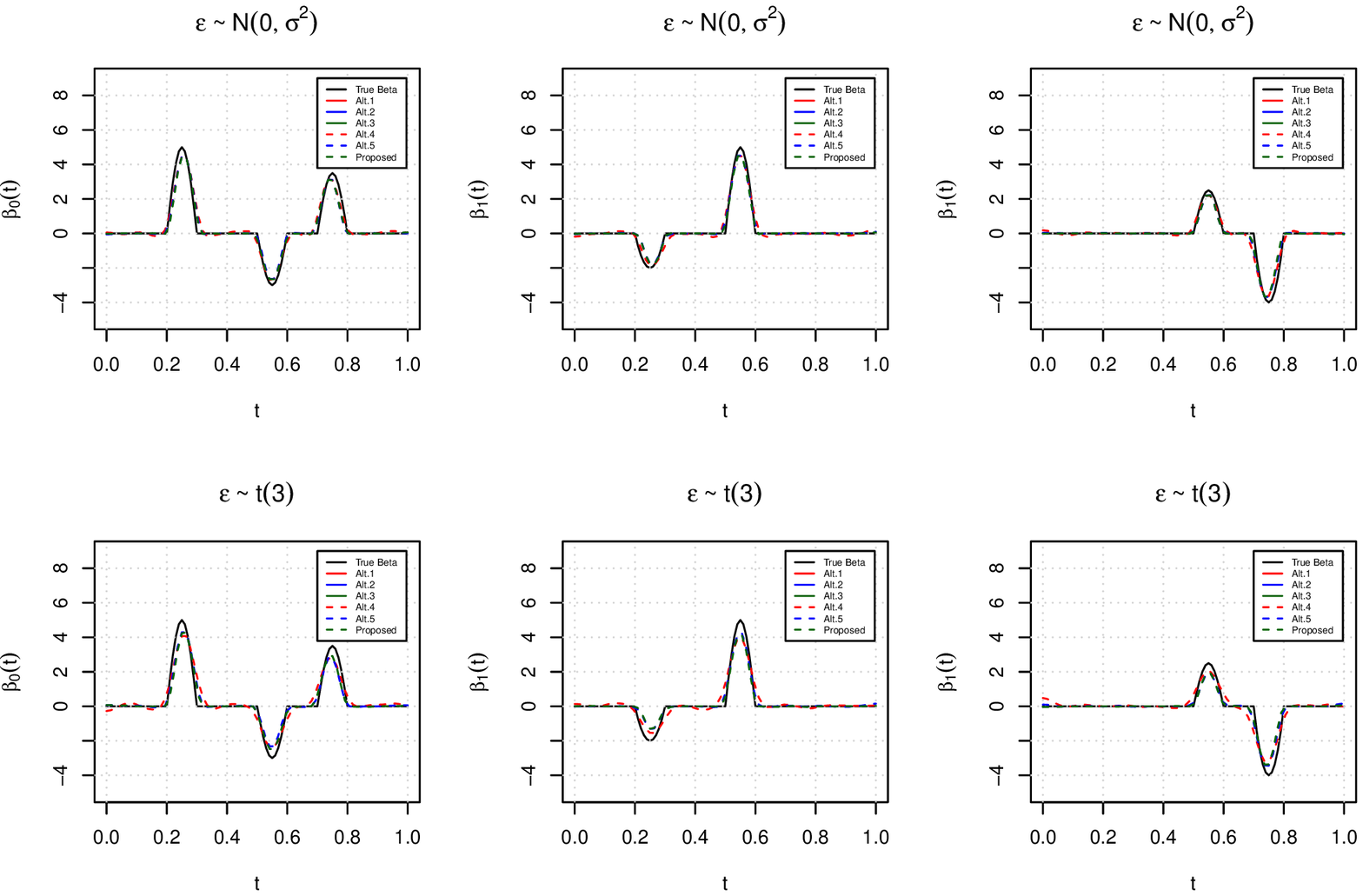}
\caption{{Average of $\hat{\beta}(t)$'s in Scenario II  with $n = 300$ based on 100 replicates under Case 1 (top) and Case 2 (bottom), respectively. Left/middle/right: $\beta_{0}(t)$/$\beta_{1}(t)$/$\beta_2(t)$.}}
\label{Fig:2}
\end{figure}

\begin{figure}[!h]
\centering
\includegraphics[ height=3.8in, width=5.2in]{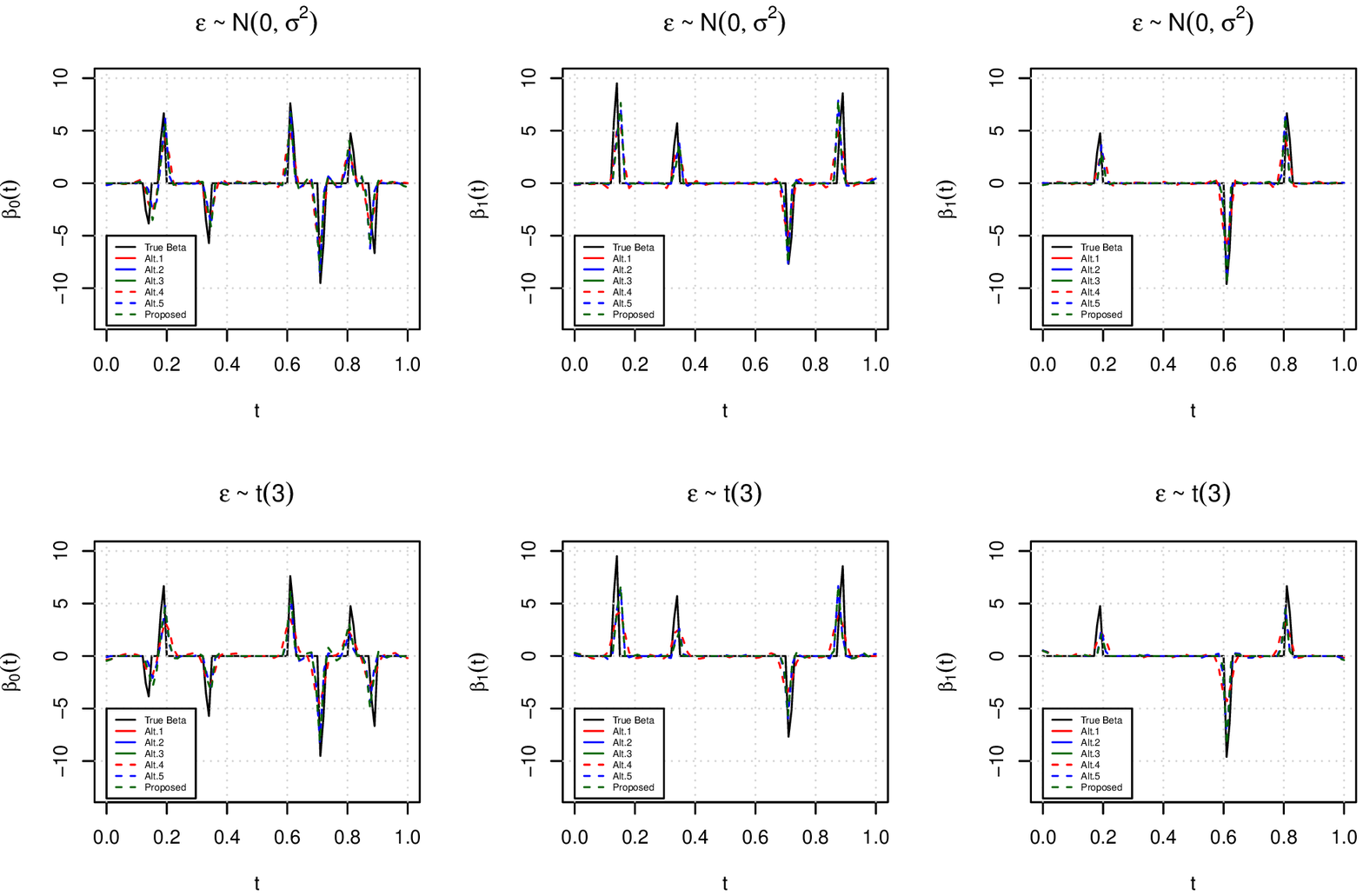}
\caption{{Average of $\hat{\beta}(t)$'s in Scenario III  with $n = 300$ based on 100 replicates under Case 1 (top) and Case 2 (bottom), respectively. Left/middle/right: $\beta_{0}(t)$/$\beta_{1}(t)$/$\beta_2(t)$.}}
\label{Fig:3}
\end{figure}

The scalar covariates ${\bm{z}}_{\cdot k}, k = 1,2$ (where ${\bm{z}}_{\cdot k}$ is the $k$-th column of $\bm{Z}$) are generated independently from the standard normal distribution, and the corresponding coefficient vector is $\bm{\gamma}^* = (0.5, 0.8)^\top$. The functional covariate $X_i(t)$ is generated as $X_i(t) = \sum a_{ij} B_j(t)$, where $a_{ij}$’s are generated from a normal distribution with mean zero and standard deviation 5, and each $B_j(t)$ is a B-spline basis function with order 5 and 71 equally spaced knots. Consider three distributions for $\epsilon_{ i}$:\\
Case 1: (homoscedasticity) $\epsilon_{ i}$ follows a normal distribution $\mathcal{N}(0, \sigma^2)$, and $\sigma$ is chosen so that the signal-to-noise ratio equals 4.\\
Case 2: (homoscedasticity) $\epsilon_{ i}$ follows a $t(3)$ distribution. \\
Case 3: (heteroscedasticity) $\epsilon_{ i} = (\frac{3}{2} |z_{i1} \int_0^1 X_i(t)\beta_1^*(t) dt |)\tilde{\epsilon}_{ i}$, where $\tilde{\epsilon}_{ i} \sim \mathcal{N}(0,1)-\mathcal{Q}_{\mathcal{N}}(\tau)$, and $\mathcal{Q}_{\mathcal{N}}(\tau)$ denotes the $\tau$th quantile of a standard norm distribution.  Note that here the model is misspecified.

\begin{table}[h]
\setlength{\abovecaptionskip}{0pt}
\setlength{\belowcaptionskip}{5pt}
\centering
\caption{Scenario I under Case 1: mean (sd) based on 100 replicates. For the quantile-based methods, $\tau = 0.5$.
}
\label{tab-1}
\resizebox{420pt}{60mm}
{
\begin{tabular}{cccccccc}
\hline
$n$   &           & Alt.1        & Alt.2     &Alt.3  & Alt.4   &Alt.5     & Proposed\\
\hline
    &           & \multicolumn{6}{c}{$\text{ISE}_0$($\times 10^{2}$)}                                    \\
300 & $\beta_0$ & 2.786(3.002) & 0.338(1.219) & 0.135(0.941) & 3.511(3.380) & 0.815(2.484) & 0.266(1.035) \\
    & $\beta_1$ & 2.524(1.514) & 0.251(0.638) & 0.279(0.861) & 2.873(1.759) & 0.407(0.926) & 0.428(1.046) \\
    & $\beta_2$ & 2.341(1.485) & 0.245(0.792) & 0.291(0.746) & 2.668(1.707) & 0.617(1.154) & 0.568(1.085) \\
500 & $\beta_0$ & 1.514(1.102) & 0.127(0.316) & 0.002(0.004) & 2.135(1.675) & 0.269(0.845) & 0.050(0.121) \\
    & $\beta_1$ & 1.533(1.050) & 0.132(0.352) & 0.043(0.228) & 1.969(1.500) & 0.254(0.668) & 0.143(0.496) \\
    & $\beta_2$ & 1.652(1.037) & 0.126(0.356) & 0.049(0.220) & 1.766(1.281) & 0.153(0.471) & 0.078(0.308) \\
    &           & \multicolumn{6}{c}{$\text{ISE}_1 (\times 10^{2})$}                                     \\
300 & $\beta_0$ & 7.742(4.976) & 7.995(5.764) & 7.542(5.389) & 9.606(6.020) & 9.969(6.746) & 9.688(6.056) \\
    & $\beta_1$ & 4.034(2.838) & 3.943(2.829) & 3.852(2.840) & 4.189(3.039) & 4.425(3.664) & 4.248(3.296) \\
    & $\beta_2$ & 3.764(2.482) & 4.212(4.063) & 4.387(3.739) & 4.580(3.134) & 5.689(5.218) & 6.092(6.107) \\
500 & $\beta_0$ & 5.347(2.526) & 5.008(2.955) & 5.412(2.948) & 6.620(3.213) & 7.254(4.033) & 7.697(3.869) \\
    & $\beta_1$ & 2.776(1.827) & 2.583(1.925) & 3.045(2.118) & 3.304(2.091) & 3.587(2.848) & 3.996(2.681) \\
    & $\beta_2$ & 2.471(1.618) & 2.363(2.010) & 2.636(2.153) & 3.016(2.118) & 3.502(2.922) & 3.910(2.907) \\
        &           & \multicolumn{6}{c}{$\text{RMSE}_{\bm{\gamma}}$}                                                 \\
300 &           & 0.054(0.044) & 0.052(0.039) & 0.051(0.040) & 0.068(0.048) & 0.068(0.053) & 0.065(0.051) \\
500 &           & 0.044(0.029) & 0.042(0.029) & 0.042(0.029) & 0.051(0.037) & 0.050(0.036) & 0.047(0.033) \\
    &           & \multicolumn{6}{c}{fTPRR}                                                                 \\
300 & $\beta_0$ & 1.000(0.000) & 0.997(0.009) & 0.999(0.004) & 1.000(0.000) & 0.997(0.012) & 0.999(0.004) \\
    & $\beta_1$ & 1.000(0.000) & 0.998(0.006) & 0.998(0.007) & 1.000(0.000) & 0.997(0.010) & 0.997(0.009) \\
    & $\beta_2$ & 1.000(0.000) & 0.996(0.020) & 0.997(0.009) & 1.000(0.000) & 0.998(0.008) & 0.994(0.018) \\
500 & $\beta_0$ & 1.000(0.000) & 0.999(0.003) & 0.999(0.003) & 1.000(0.000) & 0.999(0.004) & 0.999(0.002) \\
    & $\beta_1$ & 1.000(0.000) & 0.999(0.005) & 0.998(0.005) & 1.000(0.000) & 0.998(0.011) & 0.999(0.004) \\
    & $\beta_2$ & 1.000(0.000) & 0.999(0.006) & 0.998(0.007) & 1.000(0.000) & 0.998(0.009) & 0.999(0.004) \\
    &           & \multicolumn{6}{c}{fTNR}                                                                 \\
300 & $\beta_0$ & 0.000(0.001) & 0.756(0.205) & 0.850(0.103) & 0.000(0.001) & 0.708(0.231) & 0.793(0.156) \\
    & $\beta_1$ & 0.000(0.001) & 0.882(0.148) & 0.913(0.115) & 0.001(0.001) & 0.854(0.177) & 0.895(0.120) \\
    & $\beta_2$ & 0.001(0.001) & 0.894(0.160) & 0.922(0.089) & 0.000(0.001) & 0.847(0.175) & 0.883(0.155) \\
500 & $\beta_0$ & 0.001(0.002) & 0.750(0.174) & 0.879(0.044) & 0.001(0.001) & 0.691(0.198) & 0.763(0.152) \\
    & $\beta_1$ & 0.001(0.001) & 0.885(0.151) & 0.961(0.054) & 0.001(0.001) & 0.878(0.116) & 0.920(0.078) \\
    & $\beta_2$ & 0.001(0.001) & 0.893(0.113) & 0.958(0.044) & 0.001(0.001) & 0.894(0.100) & 0.920(0.119)\\
                     \hline
\end{tabular}
}
\end{table}

For comparison, we consider the following alternatives: (a) Alt.1 adopts the {mean squares}  lack-of-fit and a smoothness penalty (which is the last term of the proposed approach). As such, it can control smoothness as in many published studies but cannot conduct selection; 
(b) Alt.2 adopts the {mean squares} lack-of-fit and the ``functional MCP penalty + smoothness penalty’’.  {It computes conditional mean of the response and does not have a mechanism to respect the ``main effect, interaction’’ hierarchy};
(c) Alt.3 adopts the {mean squares} lack-of-fit and the same penalty as the proposed. {As such, the only difference lies in the measured conditional quantity};
(d) Alt.4 adopts the same quantile-based loss as the proposed approach and the penalty in Alt.1;
(e) Alt.5 adopts the same quantile-based loss as the proposed approach and the penalty in Alt.2. For the quantile-based approaches, we set $\tau = 0.5$ for homoscedasticity errors {(Cases 1 and 2)} and $\tau = 0.3, 0.5, 0.7$ for heteroscedasticity errors {(Case 3)}. We consider sample size $n = 300, 500$. {For each simulation replicate, we generate an independent dataset under the same setting with sample size 500 and select the optimal tunings corresponding to the best prediction}. Summary statistics are computed based on 100 independent replicates.

Performance is evaluated using the following criteria:
(a) Average integrated squared errors on null region ($\text{ISE}_0$): $\text{ISE}_{0k}= \frac{1}{l_{0k}} \int_{\mathcal{N}_k} (\hat{\beta}_k(t) - \beta_k^*(t))^2 dt,$ where $l_{0k}$ is the length of null region $\mathcal{N}_k$ of $\beta_k^*(t)$, $k = 0,1,2$.
(b) Average integrated squared errors on nonnull region ($\text{ISE}_1$):
$\text{ISE}_{1k}= \frac{1}{l_{1k}} \int_{\mathcal{N}^c_k} (\hat{\beta}_k(t) - \beta_k^*(t))^2 dt,$
where $l_{1k}$ is the length of nonnull region $\mathcal{N}^c_k$ of $\beta_k^*(t)$, $k = 0,1,2$.
(c) Root mean squared errors of $\bm{\gamma}^*$ (RMSE$_{\bm{\gamma}}$):
$\text{RMSE}_{\bm{\gamma}} =  \|\hat{\bm{\gamma}}-\bm{\gamma}^*\|_2.$
{(d) Average proportion of nonnull regions that are correctly identified (fTPR),  which is the functional counterpart of true positive rate in parametric variable selection.
(e) Average proportion of null regions that are correctly identified (fTNR),  which is the functional counterpart of true negative rate in parametric variable selection.}


{The results for Scenario I under Case 1 are provided in Table \ref{tab-1}, and those for Scenario I under Cases 2 and 3 are provided in the Appendix. In addition, the results for all cases under Scenarios II and III are provided in the supplemental materials. Figures \ref{Fig:1}-\ref{Fig:3} present the estimated $\bm{\beta}(t)$'s for Scenario I-III under Cases 1 and 2 with $n=300$.}
Overall, the findings are highly ``as expected’’. In particular, when the errors are normally distributed, the  {mean-based} methods can be advantageous. However, with Cases 2 and 3, the superiority of the quantile-based methods is obvious. In addition, it is observed that introducing local sparsity can improve estimation, and that respecting the hierarchy can further improve selection. As a representative example, consider Scenario 1 under Case 2 (Table  \ref{tab-2}, Appendix) and $n=300$. The $ \text{ISE}_{0}(\times 10^{2})$ for $\beta_2$ are 10.785,  4.577,   3.505,  6.868,  1.400, and 1.053 for the five alternative and proposed approaches, respectively. The corresponding fTNR values are 0.003 (Alt.1), 0.715 (Alt.2), 0.802 (Alt.3), 0.003 (Alt.4), 0.840 (Alt.5), and 0.891 (proposed), {and the fTPR values are similar. Furthermore, it is observed that, as the proportion of signal regions increases, it gets easier to identify sparsity.}

\section{Data analysis}
We analyze the Tecator data which is available from  http://lib.stat.cmu.edu/datasets/tecator. In this dataset, there are 215 finely chopped pure meat samples (datasets C, M, and T). For each sample, measurements are available on a spectrometric curve of spectra of absorbances measured at 100 channels with wavelength range 850-1050nm, as well as moisture, fat, and protein. The latter three are measured in percent and determined by analytic chemistry. In this analysis, we study how fat can be modeled as a function of the spectrometric curve $X(t)$ with $t$ being the wavelength and the two scalar covariates moisture $z_1$ and protein $z_2$. As developed above, we also incorporate the interactions between the spectrometric curve and scalar covariates in modeling. By introducing local sparsity, we can potentially distinguish ``useful'' regions of spectra that are informative for modeling fat from the ``noisy’’ ones. 
Prior to analysis, the range of wavelength $t$ is mapped to $[0,1]$. There are 31 equally spaced knots. Following the official guidance of this dataset, we use 129 samples (dataset C) as training for estimation, 43 samples (dataset M) for tuning parameter selection, and 43 samples (dataset T) for performance evaluation.

\begin{figure}[!h]
\centering
\includegraphics[ height=2.2 in, width=5.3 in]{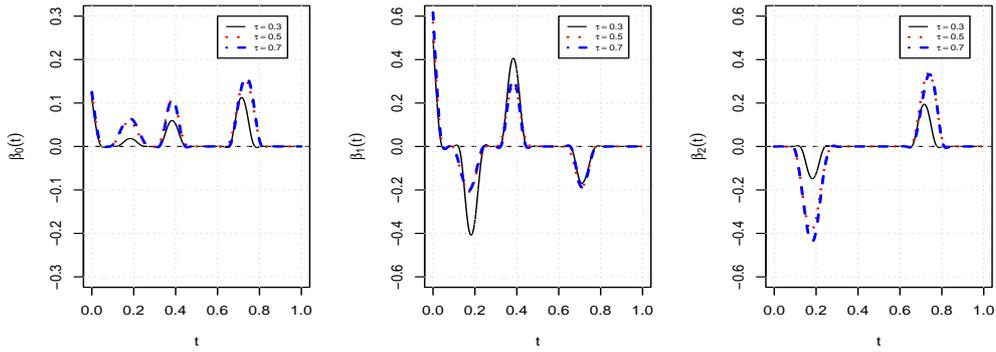}
\caption{Estimated functional effects using the proposed approach. }
\label{Fig:4}
\end{figure}


\begin{table}[h]
\setlength{\abovecaptionskip}{0pt}
\setlength{\belowcaptionskip}{5pt}
\centering
\caption{Average estimates and prediction error for the intercept and scalar covariate effects based on 100 random partitions.
}
\label{tab-6}
\resizebox{350pt}{27mm}
{
\begin{tabular}{cccccc}
\hline
\multirow{2}{*}{Method} & \multirow{2}{*}{$\tau$} & \multicolumn{3}{c}{Estimate}  & \multirow{2}{*}{Prediction Error}               \\
\cline{3-5}
&                           & $\hat{\mu}$         & $\hat{\gamma}_1$     & $\hat{\gamma}_2$     &    \\
\hline
Alt.1   &     & 5.103(0.416) & -0.070(0.011) & -0.029(0.031) & 0.039(0.006)\\
Alt.2   &   & 5.215(0.570) & -0.084(0.015) & 0.015(0.057)  & 0.037(0.006)\\
Alt.3   &  & 5.246(0.613) & -0.084(0.015) & 0.015(0.059) & 0.037(0.005)\\
Alt.4   &  0.3   & 5.375(0.424) & -0.073(0.013) & -0.035(0.040)  & 0.038(0.008)\\
   &  0.5       & 5.431(0.442) & -0.071(0.009) & -0.046(0.022) & 0.034(0.007)\\
   &  0.7      & 5.486(0.465) & -0.072(0.009) & -0.047(0.020) & 0.025(0.004) \\
Alt.5   &  0.3   & 5.720(0.493) & -0.076(0.010) & -0.047(0.025) & 0.038(0.009)\\
         & 0.5   & 5.829(0.548) & -0.076(0.008) & -0.053(0.016)  & 0.034(0.007)\\
   & 0.7   & 5.809(0.562) & -0.078(0.009) & -0.046(0.019) & 0.025(0.005) \\
Proposed  & 0.3   & 5.707(0.386) & -0.074(0.007) & -0.050(0.023) & 0.037(0.009)\\
  & 0.5  & 5.886(0.396) & -0.076(0.006) & -0.057(0.012)  & 0.033(0.007)\\
  & 0.7   & 5.908(0.386) & -0.077(0.007) & -0.053(0.014) & 0.025(0.005) \\
                       \hline
\end{tabular}
}
\end{table}

We first conduct exploratory regression analysis and present the findings in Appendix IV. Skewed residuals are observed, which justifies quantile regression. In addition, there is no obvious lack-of-fit under quantile regression. The estimation results for the functional effects are shown in Figure 4, where we consider $\tau=0.3$, 0.5, and 0.7. It is observed that the effects are locally sparse. In addition, the ``main effect, interaction’’ hierarchy is satisfied. For the scalar effects, the estimates are: $(\hat{\mu}, \hat{\gamma}_1, \hat{\gamma}_2)=(5.785, -0.068, -0.073)$ for $\tau = 0.3$, $(5.899, -0.074, -0.057)$ for $\tau = 0.5$, and $(5.930, -0.081, -0.038)$ for $\tau = 0.7$. The differences across different quantile values partly justify the need for quantile-based estimation. We recognize that a single split may not be sufficiently informative. As such, we conduct 100 random splittings of the original data, and the sizes of the three sets (under each splitting) are the same as above. In Table \ref{tab-6}, we present the mean (standard deviation) for each scalar estimate. The 100 sets of estimated functional effects are available from the authors. In addition, we also present the results of prediction error. Overall, taking the local sparsity, interpretability pertained to the variable selection hierarchy, and prediction performance into account, the analysis with the proposed approach and $\tau=0.7$ is recommended as the final one. 

\section{Discussion}
In this article, we have considered a more sophisticated functional data analysis model. The most significant advancement comes from the interaction analysis. A new estimation and variable selection method has been developed, and its theoretical and numerical properties have been carefully investigated. The proposed model can be potentially extended to include more complicated interactions (for example, between functional effects) and have higher dimensions. It will also be of interest to examine more practical applications. 

\section*{Acknowledgements}
We thank the associate editor and reviewers for careful review and insightful comments. 
This study has been partly supported by the National Natural Science
Foundation of China [11971404], National Bureau of Statistics of China [2022LZ34], Fundamental Research Funds for the Central Universities,  Research Funds of Renmin University of China [21XNH152], and NIH [CA204120].

\clearpage

\bibliography{Citation}

\section*{Appendix}

\renewcommand{\theequation}{A.\arabic{equation}}
\setcounter{equation}{0}

\subsection*{I. Lemma 1 and remarks}

\begin{lemma}
\label{lemma:1} 
Consider $M_n+1$ equally spaced knots $0=t_0<t_1<\cdots <t_{M_n} = T$ in the domain $[0,T]$. For the smooth functional main effect and interactions, we have:
\begin{equation*}
\label{eq:1.2}
\begin{aligned}
&\sum_{k = 1}^q \frac{1}{T}\int_0^T p_{\lambda_1} (|\beta_{k}(t)|)dt + \frac{1}{T} \int_0^T p_{\lambda_2}(\|\bm{\beta}(t)\|_2)dt\\
&= \lim_{M_n \rightarrow \infty} \frac{1}{M_n} \sum_{k=1}^q  \sum_{l = 1}^{M_n} p_{\lambda_1} \left( {{M_n}^{\frac{1}{2}}{T}^{-\frac{1}{2}} \|\beta_{k[l]} \|  } \right) +  \lim_{M_n \rightarrow \infty}    \frac{1}{M_n} \sum_{l = 1}^{M_n} p_{\lambda_2} \left( {M_n}^{\frac{1}{2}}{T}^{-\frac{1}{2}} \|\bm{\beta}_{[l]}\| \right),
\end{aligned}
\end{equation*}
where $\|\beta_{k[l]} \| = ( \int_{t_{l-1}}^{t_l} \beta_{k}^2(t)dt )^{1/2}$ and $\|\bm{\beta}_{[l]}\| = ( \sum_{k = 0}^q \int_{t_{l-1}}^{t_l} \beta_{k}^2(t) dt  )^{1/2}$.
\end{lemma}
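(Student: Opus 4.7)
The plan is to recognize the right-hand side as a Riemann sum over the partition $\{0 = t_0 < t_1 < \cdots < t_{M_n} = T\}$ of mesh width $T/M_n$, and invoke the mean value theorem for integrals together with continuity of $p_{\lambda_j}$ to pass to the limit.

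First I would treat the main-effect penalty. Since $\beta_k(t)$ is smooth and hence $\beta_k^2(t)$ is continuous on $[t_{l-1},t_l]$, the mean value theorem for integrals produces some $\xi_{k,l} \in (t_{l-1},t_l)$ with
\begin{equation*}
\|\beta_{k[l]}\|^{2} \;=\; \int_{t_{l-1}}^{t_l}\beta_k^2(t)\,dt \;=\; \beta_k^2(\xi_{k,l})\,\frac{T}{M_n},
\end{equation*}
so that $M_n^{1/2}T^{-1/2}\|\beta_{k[l]}\| = |\beta_k(\xi_{k,l})|$. Plugging into the right-hand side gives
\begin{equation*}
\frac{1}{M_n}\sum_{l=1}^{M_n} p_{\lambda_1}\!\left(M_n^{1/2}T^{-1/2}\|\beta_{k[l]}\|\right) \;=\; \frac{1}{T}\sum_{l=1}^{M_n} p_{\lambda_1}\!\left(|\beta_k(\xi_{k,l})|\right)\frac{T}{M_n},
\end{equation*}
which is a Riemann sum for $T^{-1}\int_0^T p_{\lambda_1}(|\beta_k(t)|)\,dt$. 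Because MCP is continuous and $|\beta_k|$ is continuous, the composition $t \mapsto p_{\lambda_1}(|\beta_k(t)|)$ is Riemann integrable, so the sum converges to the claimed integral as $M_n \to \infty$.

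For the second (group) term, I would use the identity
\begin{equation*}
\|\bm{\beta}_{[l]}\|^{2} \;=\; \sum_{k=0}^{q}\int_{t_{l-1}}^{t_l}\beta_k^2(t)\,dt \;=\; \int_{t_{l-1}}^{t_l}\|\bm{\beta}(t)\|_2^{2}\,dt,
\end{equation*}
and apply the same mean value theorem for integrals to the continuous integrand $\|\bm{\beta}(t)\|_2^2$, yielding some $\eta_l \in (t_{l-1},t_l)$ with $M_n^{1/2}T^{-1/2}\|\bm{\beta}_{[l]}\| = \|\bm{\beta}(\eta_l)\|_2$. An identical Riemann-sum argument then converts the sample-point sum into $T^{-1}\int_0^T p_{\lambda_2}(\|\bm{\beta}(t)\|_2)\,dt$. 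Summing the two pieces and noting that the $k$-sum on $p_{\lambda_1}$ consists of $q$ independent Riemann sums, the claimed equality follows.

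The main technical obstacle, and the only place where care is needed, is justifying the Riemann-sum convergence uniformly in $k$: this requires that each $p_{\lambda_j}(|\beta_k(\cdot)|)$ (resp.\ $p_{\lambda_2}(\|\bm{\beta}(\cdot)\|_2)$) be Riemann integrable, which follows from the continuity of the MCP penalty and the smoothness assumption on $\beta_k^*(t)$ in Condition 2. A minor subtlety is that the sample points $\xi_{k,l}$ and $\eta_l$ depend on the function and on $M_n$, but since they lie in $[t_{l-1},t_l]$ of length $T/M_n$, any tagged partition yields the same limit, so the limit identifications go through without further work.
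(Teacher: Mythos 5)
Your proof is correct. The paper itself does not spell out an argument at all: it simply states that the lemma ``can be derived from Theorem 1 of Lin et al.\ (2017)'', the reference in which the analogous limit identity for the functional SCAD penalty was first established. Your route is therefore genuinely different in character, being a self-contained elementary derivation: the first mean value theorem for integrals turns each block norm $M_n^{1/2}T^{-1/2}\|\beta_{k[l]}\|$ into the sampled value $|\beta_k(\xi_{k,l})|$ (and likewise $\|\bm{\beta}(\eta_l)\|_2$ for the group term), after which the sums become tagged Riemann sums of the continuous compositions $t\mapsto p_{\lambda_j}(\cdot)$, whose convergence to the integrals follows for any choice of tags once Riemann integrability is in hand. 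This is exactly the mechanism underlying the cited theorem, so you are in effect reproving it in the present setting; what your version buys is transparency and independence from the external reference, while the paper's citation buys brevity and defers the (routine) analysis to prior work. The only points worth stating explicitly in a polished write-up are the ones you already flag: $p_{\lambda_j}$ is continuous (indeed Lipschitz with constant $\lambda_j$), the finite sum over $k=1,\dots,q$ commutes with the limit because $q$ is fixed, and convergence of Riemann sums holds uniformly over the choice of sample points $\xi_{k,l},\eta_l$ because the integrands are uniformly continuous on the compact domain.
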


This lemma can be derived from Theorem 1 of \cite{lin2017locally}. It shows that the penalty evaluated over the whole domain is asymptotically equivalent to the sum over a large number of subregions. This nicely matches the spline basis expansion framework. For each subregion, we note that the penalty still has a sparse group form. As such, the ``main effect, interaction’’ hierarchy is expected to hold for each subregion (and so the whole domain).

\clearpage
\subsection*{II: Proof of Theorem 1}

Let $C$ be a generic positive constant which may take different values under different circumstances. Denote: 
$$g^*(X_i(t),\bm{z}_i)=\int_0^T X_i(t)\beta_0^*(t)dt + \sum_{k=1}^{q} z_{ik} \int_0^T X_i(t)\beta_k^*(t)dt.$$
Recall that $\mathop{\rm{\text{sup}}}_{t \in [0,T]} \left|\beta_k^*(t) - \bm{B}^\top(t) \bm{b}_k^*\right| = O(M_n^{-r})$. With the boundedness Condition 3, we have  $g^*(X_i(t), \bm{z}_i) = \bm{\psi}^\top_i \bm{b}^* + O( M_n^{-r})$. Denote the empirical version of the projection of $\bm{z}_{\cdot k}$ onto the spline approximation of the functional covariate space as $\bm{h}_{\cdot k} = \bm{\Psi} \hat{\bm{\varpi}}_k$, where $\bm{z}_{\cdot k}$ is the $k$th column of $\bm{Z}$ and  $\hat{\bm{\varpi}}_k$ is the minimizer of: 
\begin{equation*}
\min_{{\bm{\varpi}}_k \in \mathcal{R}^{q_n}} \sum_{i = 1}^n f_i(0)(z_{ik} - \bm{\psi}_i^\top \bm{\varpi}_k)^2.
\end{equation*}
The solution to {the above problem} is $\hat{\bm{\varpi}}_k = (\bm{\Psi}^\top \bm{B}_n \bm{\Psi})^{-1}\bm{\Psi}^\top \bm{B}_n \bm{z}_{\cdot k}$. Let $\bm{H}$ be the $n\times q$ matrix with the $k$th column being $\bm{h}_{\cdot k}$. We define the projection matrix $\bm{P} = \bm{\Psi}(\bm{\Psi}^\top \bm{B}_n \bm{\Psi})^{-1}\bm{\Psi}^\top \bm{B}_n \in \mathcal{R}^{n \times n}$, and it is obvious that $\bm{H} = \bm{P}\bm{Z}$.  
Thus we have $\check{\bm{Z}} = (\check{\bm{z}}_1, \cdots, \check{\bm{z}}_n)^\top = (\bm{I}_n - \bm{P}) \bm{Z}$. 

Define $\tilde{\bm{z}}_i  = n^{-\frac{1}{2}}\check{\bm{z}}_i \in \mathcal{R}^q$, $\bm{\Psi}_B^2 = \bm{\Psi}^\top \bm{B}_n \bm{\Psi} \in \mathcal{R}^{q_n\times q_n}$, and $\tilde{\bm{\psi}}_{i} = \bm{\Psi}_B^{-1} \bm{\psi}_{i} \in \mathcal{R}^{q_n}$. 
Following \cite{sherwood2016partially}, we reparameterize the quantile loss function as:
\begin{equation*}
\begin{aligned}
 \rho_\tau \left(y_i - \bm{\psi}_i^\top \bm{b} - \bm{z}_i^\top \bm{\gamma} \right) 
= \rho_\tau \left(\epsilon_i -\tilde{\bm{z}}_i^\top \bm{\theta}_1 - \tilde{\bm{\psi}}_i^\top \bm{\theta}_2 - u_{ni} \right),
\end{aligned}
\end{equation*}
where $\bm{\theta}_1 = \sqrt{n}(\bm{\gamma}- \bm{\gamma}^*) \in \mathcal{R}^q$, $\bm{\theta}_2 = \bm{\Psi}_B (\bm{b} - \bm{b}^*) +\bm{\Psi}_B^{-1}\bm{\Psi}^\top \bm{B}_n \bm{Z} (\bm{\gamma}- \bm{\gamma}^*) \in \mathcal{R}^{q_n}$ and $u_{ni} = \bm{\psi}_{i}^\top \bm{b}^* -g^*(X_i(t), \bm{z}_i)$. Let $\bm{\theta} = (\bm{\theta}_1^\top, \bm{\theta}_2^\top)^\top$. The objective function under the reparameterization is:
\begin{equation*}
\begin{aligned}
\tilde{Q}(\bm{\theta})  = & \frac{1}{n} \sum_{i=1}^n \rho_\tau (\epsilon_i -\tilde{\bm{z}}_i^\top \bm{\theta}_1 - \tilde{\bm{\psi}}_i^\top \bm{\theta}_2 - u_{ni} ) + \sum_{k = 1}^q \sum_{l = 1}^{M_n} p_{\lambda_1} (\| \bm{b}_k \|_{\bm{W}_l})\\
& + \sum_{l = 1}^{M_n} p_{\lambda_2} (\| \bm{b} \|_{\bm{W}_l}) + \eta \sum_{k=0}^q  \bm{b}_k^\top \bm{V}\bm{b}_k.
\end{aligned}
\end{equation*}

Define: 
\begin{equation*}
\begin{aligned}
D_i(\bm{\theta}) = & \rho_\tau ( \epsilon_i - \tilde{\bm{z}}_i^\top \bm{\theta}_1 - \tilde{\bm{\psi}}_i^\top \bm{\theta}_2 - u_{ni}) -  \rho_\tau ( \epsilon_i - u_{ni}) + (\tilde{\bm{z}}_i^\top \bm{\theta}_1 + \tilde{\bm{\psi}}_i^\top \bm{\theta}_2) \mathcal{D}_\tau(\epsilon_i) \\
& - E[\rho_\tau ( \epsilon_i - \tilde{\bm{z}}_i^\top \bm{\theta}_1 - \tilde{\bm{\psi}}_i^\top \bm{\theta}_2 - u_{ni}) - \rho_\tau ( \epsilon_i - u_{ni}) ], 
\end{aligned}
\end{equation*}
where $\mathcal{D}_\tau(\epsilon_i) = \tau - I(\epsilon_i < 0)$. 
We first state the following lemmas.
\begin{lemma}
\label{lemma2}
Let $d_n = q_n + q$. Under Conditions 1-4,  for any positive constant $L$, we have:
\begin{equation*}
\sup_{\|\bm{\theta} \|_2 \leq L\sqrt{d_n}} \frac{1}{{d_n}} \left| \sum_{i=1}^n D_i(\bm{\theta}) \right| = o_p(1).
\end{equation*}
\end{lemma}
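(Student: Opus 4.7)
The plan is to combine Knight's identity with a variance bound and a covering argument over $\{\|\bm{\theta}\|_2 \leq L\sqrt{d_n}\}$. Set $v_i(\bm{\theta}) = \tilde{\bm{z}}_i^\top \bm{\theta}_1 + \tilde{\bm{\psi}}_i^\top \bm{\theta}_2$. Knight's identity gives
\begin{equation*}
\rho_\tau(\epsilon_i - u_{ni} - v_i) - \rho_\tau(\epsilon_i - u_{ni}) = -v_i \mathcal{D}_\tau(\epsilon_i - u_{ni}) + K_i(v_i),
\end{equation*}
with $K_i(v) = \int_0^v[I(\epsilon_i - u_{ni} \leq s) - I(\epsilon_i - u_{ni} \leq 0)]\,ds$. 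Inserting this into the definition of $D_i$ and expanding $\tau - F_i(u_{ni}) = -f_i(\tilde{u})u_{ni}$ around zero (Condition 1), I decompose $D_i(\bm{\theta})$ as the sum of a centered Knight piece $K_i(v_i) - E[K_i(v_i)\mid X_i, \bm{z}_i]$, a centered piece of the form $v_i\{\mathcal{D}_\tau(\epsilon_i) - \mathcal{D}_\tau(\epsilon_i - u_{ni}) - E[\mathcal{D}_\tau(\epsilon_i) - \mathcal{D}_\tau(\epsilon_i - u_{ni})]\}$, and a deterministic remainder of magnitude $O(|v_i|M_n^{-r})$ coming from $\sup_i|u_{ni}| = O(M_n^{-r})$ (Condition 2).

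Next I bound the variance pointwise. Condition 3 yields $\|\tilde{\bm{\psi}}_i\|^2 = \bm{\psi}_i^\top(\bm{\Psi}^\top \bm{B}_n\bm{\Psi})^{-1}\bm{\psi}_i = O(M_n/n)$ and $\|\tilde{\bm{z}}_i\|^2 = O(1/n)$, so Cauchy--Schwarz on the ball $\|\bm{\theta}\|_2 \leq L\sqrt{d_n}$ with $d_n \asymp M_n$ gives the uniform envelope $|v_i(\bm{\theta})| \leq C M_n/\sqrt{n} = o(1)$ under Condition 4. Cauchy--Schwarz on the Knight integral combined with bounded density gives $E[K_i(v_i)^2\mid X_i, \bm{z}_i] \leq C|v_i|^3$, and the linear-in-$v_i$ piece has conditional second moment $O(v_i^2 M_n^{-r})$. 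Independence across $i$ therefore delivers $\mathrm{Var}(\sum_i D_i(\bm{\theta})) \leq C M_n^3/\sqrt{n}$, which is $o(d_n^2) = o(M_n^2)$ because $r > 3/2$ and $M_n = O(n^{1/(2r+1)})$.

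To upgrade to uniform control, note that $|K_i(v) - K_i(v')| \leq |v-v'|$ and $|\mathcal{D}_\tau| \leq 1$ make $\bm{\theta} \mapsto D_i(\bm{\theta})$ Lipschitz with constant $C(\|\tilde{\bm{z}}_i\| + \|\tilde{\bm{\psi}}_i\|)$. Since $\sum_i \|\tilde{\bm{\psi}}_i\|^2 = \mathrm{tr}(\bm{\Psi}(\bm{\Psi}^\top\bm{B}_n\bm{\Psi})^{-1}\bm{\Psi}^\top) = O(M_n)$, Cauchy--Schwarz gives $\sum_i\|\tilde{\bm{\psi}}_i\| = O(\sqrt{nM_n})$, and discretization error over a $\delta$-net is at most $O(\delta\sqrt{nM_n})$. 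Choose $\delta$ a negative polynomial rate in $n$ so that $\delta\sqrt{nM_n}/d_n = o(1)$; the net has at most $(C\sqrt{d_n}/\delta)^{d_n}$ points, giving log-cardinality $O(M_n\log n)$. Using the envelope $|v_i| \leq CM_n/\sqrt{n}$ to cap the summands in absolute value, Bernstein's inequality at each grid point yields an exponential tail of order $\exp(-c\sqrt{n}/M_n)$ for deviations of order $d_n$, and $\sqrt{n}/M_n \gg M_n\log n$ under $r > 3/2$, so the union bound closes the argument.

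The hard part is balancing the three scales---the $M_n^3/\sqrt{n}$ variance of the Knight piece, the $M_n\log n$ log-cardinality of the covering net, and the $d_n \asymp M_n$ target normalization---with only exponential concentration from Bernstein available to absorb the net. It is precisely the interaction of Condition 4 ($M_n = O(n^{1/(2r+1)})$) with the smoothness lower bound $r > 3/2$ from Condition 2 that simultaneously forces $M_n/\sqrt{n} \to 0$ (for the variance) and $M_n^2\log n \ll \sqrt{n}$ (for the union bound); a Chebyshev-only argument at each grid point does not suffice, and weaker smoothness would break the rate matching.
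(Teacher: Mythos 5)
Your proposal is correct and fills in, with the right decomposition (Knight's identity, pointwise variance bounds of order $|v_i|^3$ and $v_i^2 M_n^{-r}$, and a covering net plus Bernstein's inequality balanced against the $\exp(O(M_n\log n))$ net cardinality via $r>3/2$), exactly the argument that the paper itself supplies only by citation to Lemma B.1 of Sherwood and Wang (2016); so this is the same approach, just written out. One small remark: the ``deterministic remainder of magnitude $O(|v_i|M_n^{-r})$'' is in fact identically zero, since $D_i(\bm{\theta})$ already subtracts the full conditional expectation of the loss difference and $E[\mathcal{D}_\tau(\epsilon_i)]=0$, so $D_i$ decomposes exactly into the two centered pieces and no separate bound on that remainder is needed (which is fortunate, as the crude bound $\sum_i|v_i|M_n^{-r}/d_n$ is only $O(1)$ at the boundary rate).
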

\begin{proof}
Proof follows that of Lemma B.1 in \cite{sherwood2016partially} under Conditions 1-4.
\end{proof}

\begin{lemma}
\label{lemma3}
Let $\tilde{\bm\theta}_1   = \sqrt{n} \left( \check{\bm{Z}}^{\top} \bm{B}_n \check{\bm{Z}}\right)^{-1}\check{\bm{Z}} \mathcal{D}_\tau (\bm{\epsilon})$, where $\mathcal{D}_\tau (\bm{\epsilon}) = \left(\mathcal{D}_\tau (\epsilon_1), \cdots, \mathcal{D}_\tau (\epsilon_n)\right)^\top$. Under Conditions 1-4, we have $\| \tilde{\bm\theta}_1 \|_2 = O_p(1)$.
\end{lemma}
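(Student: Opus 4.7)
The plan is to write $\tilde{\bm{\theta}}_1$ as a product of a deterministically bounded matrix and a centered random vector, then control each factor separately. Absorbing the $\sqrt{n}$ into two $n^{-1/2}$ pieces yields the clean decomposition
\[
\tilde{\bm{\theta}}_1 \;=\; \bigl(n^{-1}\check{\bm{Z}}^\top \bm{B}_n \check{\bm{Z}}\bigr)^{-1}\bm{S}_n,
\qquad
\bm{S}_n \;:=\; n^{-1/2}\check{\bm{Z}}^\top\mathcal{D}_\tau(\bm{\epsilon}),
\]
so that by sub-multiplicativity of the operator norm it suffices to show that $\|(n^{-1}\check{\bm{Z}}^\top\bm{B}_n\check{\bm{Z}})^{-1}\|_{\mathrm{op}}$ is deterministically bounded and that $\|\bm{S}_n\|_2 = O_p(1)$.

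For the matrix factor I would combine Conditions 1 and 3. Condition 1 pins the spectrum of $\bm{B}_n=\mathrm{diag}\{f_1(0),\ldots,f_n(0)\}$ inside $[c,C]$, so for every unit vector $\bm{v}\in\mathcal{R}^q$ a Rayleigh-quotient sandwich gives
\[
c\,\bm{v}^\top(n^{-1}\check{\bm{Z}}^\top\check{\bm{Z}})\bm{v}
\;\le\; \bm{v}^\top(n^{-1}\check{\bm{Z}}^\top\bm{B}_n\check{\bm{Z}})\bm{v}
\;\le\; C\,\bm{v}^\top(n^{-1}\check{\bm{Z}}^\top\check{\bm{Z}})\bm{v}.
\]
Combined with the spectral bounds on $n^{-1}\check{\bm{Z}}^\top\check{\bm{Z}}$ from Condition 3, this traps the eigenvalues of $n^{-1}\check{\bm{Z}}^\top\bm{B}_n\check{\bm{Z}}$ between $cC_6$ and $CC_7$, hence $\|(n^{-1}\check{\bm{Z}}^\top\bm{B}_n\check{\bm{Z}})^{-1}\|_{\mathrm{op}}\le(cC_6)^{-1}$ deterministically.

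For the stochastic factor I would exploit the quantile-score structure. The variable $\mathcal{D}_\tau(\epsilon_i)=\tau-I(\epsilon_i<0)$ is uniformly bounded by $1$ and, by the model assumption $\Pr(\epsilon_i\le 0\mid X_i(t),\bm{z}_i)=\tau$, satisfies $E[\mathcal{D}_\tau(\epsilon_i)\mid X_i(t),\bm{z}_i]=0$ and $\mathrm{Var}(\mathcal{D}_\tau(\epsilon_i)\mid X_i(t),\bm{z}_i)=\tau(1-\tau)$. Since $\check{\bm{Z}}$ is a function of the design alone, built from $\bm{\Psi}$, $\bm{B}_n$, and $\bm{Z}$ with no dependence on $\bm{\epsilon}$, conditioning on the design gives $E[\bm{S}_n\mid\text{design}]=\bm{0}$ and
\[
E\|\bm{S}_n\|_2^2
\;=\; \tau(1-\tau)\,E\bigl[\mathrm{tr}(n^{-1}\check{\bm{Z}}^\top\check{\bm{Z}})\bigr]
\;\le\; qC_7\,\tau(1-\tau).
\]
Markov's inequality then delivers $\|\bm{S}_n\|_2=O_p(1)$, and multiplying the two bounds produces $\|\tilde{\bm{\theta}}_1\|_2=O_p(1)$.

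The main obstacle is essentially conceptual rather than technical: one must recognize the right factorization, verify that $\check{\bm{Z}}$ carries no information about $\bm{\epsilon}$ so that the conditional expectation genuinely centers the score, and notice that Condition 1 is exactly what converts the design-level spectral hypothesis of Condition 3 into a spectral bound on the $\bm{B}_n$-weighted Gram matrix appearing in $\tilde{\bm{\theta}}_1$. Beyond these identifications, the argument reduces to Rayleigh quotients and a one-line second-moment computation.
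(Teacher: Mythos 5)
Your argument is correct, and it is in fact more informative than what the paper provides: the paper's ``proof'' of this lemma is a one-line deferral to Lemma 5(1) of Sherwood and Wang (2016), whereas you give a complete, self-contained derivation. The two key observations you make are exactly the right ones. First, the factorization $\tilde{\bm{\theta}}_1=(n^{-1}\check{\bm{Z}}^\top\bm{B}_n\check{\bm{Z}})^{-1}\bm{S}_n$ with $\bm{S}_n=n^{-1/2}\check{\bm{Z}}^\top\mathcal{D}_\tau(\bm{\epsilon})$ isolates a deterministic matrix whose spectrum is controlled by sandwiching $\bm{B}_n$ between $c\bm{I}_n$ and $C\bm{I}_n$ (Condition 1) and invoking the spectral bounds of Condition 3; note that you are implicitly and correctly reading the lemma's $\check{\bm{Z}}\mathcal{D}_\tau(\bm{\epsilon})$ as $\check{\bm{Z}}^\top\mathcal{D}_\tau(\bm{\epsilon})$ and Condition 3's $n^{-1}\check{\bm{Z}}\check{\bm{Z}}^\top$ as $n^{-1}\check{\bm{Z}}^\top\check{\bm{Z}}$, both of which are evident typos since the stated forms do not conform dimensionally (the $n\times n$ Gram matrix has rank at most $q$ and could not have $\lambda_{\min}\geq C_6>0$). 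Second, the quantile-score property $E[\mathcal{D}_\tau(\epsilon_i)\mid X_i(t),\bm{z}_i]=0$ together with independence of the errors across $i$ makes the second-moment bound $E\|\bm{S}_n\|_2^2=\tau(1-\tau)E[\mathrm{tr}(n^{-1}\check{\bm{Z}}^\top\check{\bm{Z}})]\leq qC_7\tau(1-\tau)$ exact, and Chebyshev finishes. This is essentially the same computation that underlies the cited Sherwood--Wang lemma, so you have not taken a conceptually different route so much as made the borrowed argument explicit; what your version buys is transparency about precisely which parts of Conditions 1 and 3 are used ($f_i(0)\in[c,C]$ and the two-sided eigenvalue bound on the projected design), at the cost of none of the generality.
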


\begin{proof}
Proof follows from that of Lemma 5 (1) in \cite{sherwood2016partially}.
\end{proof}

\noindent\emph{Proof of Theorem 1 (1)} 
Here we show that there exists a local minimizer $\hat{\bm{\theta}}= (\hat{\bm{\theta}}_1^{\top}, \hat{\bm{\theta}}_2^{\top})^{\top}$ of (\ref{eq:6}) such that $\| {\hat{\bm{\theta}} }\|_2 = O_p(\sqrt{M_n})$ and $\| \hat{\bm{\theta}}_1 \|_2 = O_p(1)$.


Note that $d_n= O(M_n)$.  To prove $\| {\hat{\bm{\theta}} }\|_2 = O_p(\sqrt{M_n})$, it is sufficient to show that, for any $\delta > 0$, there exists a sufficiently large positive constant $L$ such that:
\begin{equation}
\label{eq:A.1}
P\left\{ \inf_{\|\bm{\theta}\|_2 \leq L\sqrt{d_n}} \tilde{Q}(\bm{\theta}) > \tilde{Q}(\bm{0})    \right\} \geq 1-\delta.
\end{equation}
That is, with probability at least $1-\delta$, there exists a local minimizer such that $\|\hat{\bm{\theta}}\|_2 \leq L\sqrt{d_n}$.

We first show that, for a sufficiently large positive $L$, there exists a positive constant $C$ such that:
\begin{equation}
\label{eq:A.2}
 \inf_{\|\bm{\theta} \|_2  = L\sqrt{d_n}} \frac{1}{n} \sum_{i= 1}^n \left[ \rho_\tau(\epsilon_i - \tilde{\bm{z}}_i^\top \bm{\theta}_1  - \tilde{\bm{\psi}}_i^\top \bm{\theta}_2 -u_{ni}) -  \rho_\tau (\epsilon_i  -u_{ni} )  \right] > CL^2 {d_n}/n
\end{equation}
with probability tending to one.
From Lemma \ref{lemma2}, we have:
\begin{equation*}
\begin{aligned}
&\sup_{\|\bm{\theta} \|_2 \leq L\sqrt{d_n}} \frac{1}{{n}} \left| \sum_{i=1}^n D_i(\bm{\theta}) \right| \\
&=  \sup_{\|\bm{\theta} \|_2 \leq L\sqrt{d_n}} \frac{1}{{n}}  \left|   \sum_{i=1}^n \rho_\tau (\epsilon_i - \tilde{\bm{z}}_i^\top \bm{\theta}_1  - \tilde{\bm{\psi}}_i^\top \bm{\theta}_2 -u_{ni} )  -\sum_{i=1}^n \rho_\tau (\epsilon_i  -u_{ni} ) \right.\\
&\left. \quad \quad \quad \quad \quad \quad \quad \  - \sum_{i=1}^n E\left[\rho_\tau (\epsilon_i - \tilde{\bm{z}}_i^\top \bm{\theta}_1  - \tilde{\bm{\psi}}_i^\top \bm{\theta}_2 -u_{ni} ) - \rho_\tau (\epsilon_i  -u_{ni} )\right] \right.\\
& \left. \quad \quad \quad \quad \quad \quad \quad \ + \sum_{i=1}^n (\tilde{\bm{z}}_i^\top \bm{\theta}_1  + \tilde{\bm{\psi}}_i^\top \bm{\theta}_2)\mathcal{D}_\tau (\epsilon_i)  \right| =o_p(d_n/n).
\end{aligned}
\end{equation*}
Denote $F_{n1} = n^{-1}\sum_{i=1}^n  E [ \rho_\tau (\epsilon_i - \tilde{\bm{z}}_i^\top \bm{\theta}_1  - \tilde{\bm{\psi}}_i^\top \bm{\theta}_2 -u_{ni} ) -  \rho_\tau (\epsilon_i  -u_{ni} ) ]$ and $F_{n2} = n^{-1}  \sum_{i=1}^n (\tilde{\bm{z}}_i^\top \bm{\theta}_1  + \tilde{\bm{\psi}}_i^\top \bm{\theta}_2)\mathcal{D}_\tau (\epsilon_i) $. Following similar arguments as in the proof of Lemma 4 in \cite{sherwood2016partially}, we can show that for a sufficiently large positive $L$, $F_{n1}$ has asymptotically a lower bound of $CL^2 d_n/n$ and $F_{n2} =O_p(d_n^{1/2}/n)$. Therefore, (\ref{eq:A.2}) is proved. 

Let $D$ denote the domain $[0,T]$. For given $\lambda_1, \lambda_2$ and $M_n$, and for each $\beta_k^*(t)$, we divide $D$ into three parts: the first part $D_k^{[1]} = \{t \in D: |\beta_k^*(t)| \geq C\xi \max(\lambda_1, \lambda_2)\}$ for some constant $C > 1$, the second part $D_k^{[2]} = \{ t \in D: \beta_k^*(t) = 0 \}$, and the third part $D_k^{[3]} = \{ t \in D: 0 < |\beta_k^*(t)| <  C\xi  \max(\lambda_1, \lambda_2)  \}$. Since $\max (\lambda_1, \lambda_2) \rightarrow 0$ as $n\rightarrow \infty$, $D_k^{[3]}$ shrinks to the empty set $\emptyset$ as $n\rightarrow \infty$.

Next, we consider the penalty terms. With $\|\bm{\theta}\|_2 = O(\sqrt{d_n})$ and the definition of $\bm{\theta}$, we have $\|\bm{\gamma}-\bm{\gamma}_0\|_2=O(\sqrt{d_n/n})$. In addition, 
\begin{equation}\label{eq:A.3}
\|\bm{\Psi}_B (\bm{b}- \bm{b}^*)\|_2^2 \leq 2\|\bm{\theta}_2\|_2^2 + 2\|\bm{\Psi}_B^{-1}\bm{\Psi}^\top \bm{B}_n \bm{Z}(\bm{\gamma}-\bm{\gamma}_0)\|_2^2 = O(d_n).
\end{equation}
The last equality holds because $\|\bm{\Psi}_B^{-1}\bm{\Psi}^\top \bm{B}_n \bm{Z}(\bm{\gamma}-\bm{\gamma}_0)\|_2^2 = O(n \|\bm{\gamma}-\bm{\gamma}_0\|_2^2)$ by Conditions 1 and 3. Then we have $\| \bm{b}_k - \bm{b}_k^* \|_2 = O(d_n n^{-1/2})$.
Notice that $\beta_k^*(t) = \bm{B}^\top(t) \bm{b}_k^* + O(M_n^{-r})$. For a subregion $I_l \subset D_k^{[1]}, k = 0, \cdots, q$, with $M_n=O(d_n)$, Condition 4, and $n^{-\frac{r}{2r+1}}/\min(\lambda_1, \lambda_2) = o(1)$, we have: 
$$
\| \bm{b}_k^* \|_{\bm{W}_l} = \sqrt{\frac{M_n}{T} \int_{t_{l-1}}^{t_l} \beta_{k}^{*2}(t)dt } + O(M_n^{-r}) \geq C\xi \max(\lambda_1, \lambda_2).
$$
Applying some inequality techniques, we can derive $\|\bm{b}_k\|_{\bm{W}_l} \geq C\xi \max (\lambda_1, \lambda_2)$. With the properties of MCP and $C>1$, we have $p_{\lambda_1} ( \|\bm{b}_k \|_{\bm{W}_l} ) = p_{\lambda_1} (\|\bm{b}_{k}^* \|_{\bm{W}_l} )$ and $p_{\lambda_2} ( \|\bm{b} \|_{\bm{W}_l} ) = p_{\lambda_2} (\|\bm{b}^* \|_{\bm{W}_l} )$ for $l$ satisfying $I_l \subset D_k^{[1]}$. 
For a subregion $I_l \subset D_k^{[2]} \cap (\cup_{k^\prime \neq k} D_{k^\prime}^{[1]})$, 
by the choice of $\bm{b}^*$, we have $\| \bm{b}_k^*\|_{\bm{W}_l} = 0$ and $\|\bm{b}^*\|_{\bm{W}_l} \geq C \xi\max(\lambda_1, \lambda_2)$, and thus $p_{\lambda_1}(\| \bm{b}_k \|_{\bm{W}_l}) \geq p_{\lambda_1}(\| \bm{b}_k^* \|_{\bm{W}_l})=0$ and $p_{\lambda_2} ( \|\bm{b} \|_{\bm{W}_l} ) = p_{\lambda_2} (\|\bm{b}^* \|_{\bm{W}_l} )$.
For a subregion $I_l \subset D_k^{[2]} \cap (\cup_{k^\prime \neq k} D_{k^\prime}^{[1]})^c$, we have $\|\bm{b}^*\|_{\bm{W}_l} =0$, and thus $p_{\lambda_1}(\| \bm{b}_k \|_{\bm{W}_l}) \geq p_{\lambda_1}(\| \bm{b}_k^* \|_{\bm{W}_l})=0$ and $p_{\lambda_2} ( \|\bm{b} \|_{\bm{W}_l} ) \geq p_{\lambda_2} (\|\bm{b}^* \|_{\bm{W}_l} )=0$. Summarizing the above three cases, we have:
\begin{equation}
\label{eq:A.4}
\sum_{k=1}^q \sum_{l = 1}^{M_n} p_{\lambda_1} \left(  \|\bm{b}_k \|_{\bm{W}_l} \right) \geq  \sum_{k=1}^q \sum_{l = 1}^{M_n}p_{\lambda_1} \left(   \|\bm{b}_{k}^* \|_{\bm{W}_l} \right) 
\end{equation}
and
\begin{equation}
\label{eq:A.5}
 \sum_{l = 1}^{M_n} p_{\lambda_1} \left(   \|\bm{b} \|_{\bm{W}_l}\right) \geq  \sum_{l = 1}^{M_n}p_{\lambda_1} \left(  \|\bm{b}^* \|_{\bm{W}_l}\right). 
\end{equation}

Also, by the Cauchy-Schwarz inequality and $\eta = o(n^{-1/2})$, we have:
\begin{eqnarray}
\label{eq:A.6}
\sum_{k=0}^q \eta \bm{b}_k^\top \bm{V} \bm{b}_k -\sum_{k=0}^q \eta \bm{b}_{k}^{*\top} \bm{V} \bm{b}_{k}^* & =&
\sum_{k = 0}^q \eta \left[(\bm{b}_k-\bm{b}_{k}^*)^\top \bm{V} (\bm{b}_k-\bm{b}_{k}^*) + 2(\bm{b}_k-\bm{b}_{k}^*)^\top \bm{V} \bm{b}_k^* \right] \nonumber\\
& \leq & O(\eta d_n n^{-1}  +\eta n^{-1/2})= o( n^{-1}),
\end{eqnarray}
where the inequality follows from the fact that $\| \bm{b}_k - \bm{b}_k^* \|_2 = O(d_n n^{-1/2})$, $\lambda_{\max}(\bm{V}) = O(d_n^{-1})$ and $\sup_j |{\bm{V}}_{\cdot j} \bm{b}_k^*| \leq C d_n^{-1}$, where ${\bm{V}}_{j\cdot}$ is the $j$th row of $\bm{V}$ for $j = 1,\cdots, M_n+d$.

Combining (\ref{eq:A.2}), (\ref{eq:A.4}), (\ref{eq:A.5}) and (\ref{eq:A.6}), for $\|\bm{\theta} \|_2 = L\sqrt{d_n}$ and a sufficiently large $L$, we prove (\ref{eq:A.1}).
Therefore, there exists a local minimizer $\hat{\bm{\theta}}$ such that $\|\hat{\bm{\theta}} \|_2 = O_p(\sqrt{d_n})$. 
Similar to (\ref{eq:A.3}), it follows that $\| \bm{\Psi}_B (\hat{\bm{b}} - \bm{b}^*) \|_2 = O_p(\sqrt{d_n} )$, and thus $\| \hat{\bm{b}}- \bm{b}^* \|_2 = O_p(d_n n^{-1/2})$. Then we have:
\begin{eqnarray*}
\int_0^T (\hat{\beta}_{k}(t) - \beta_k^*(t) )^2 dt
&\leq & 2\int_0^T (\hat{\beta}_{k}(t) - \bm{B}^\top (t)\bm{b}_{k}^*)^2dt  + 2\int_0^T (\bm{B}^{\top}(t) \bm{b}_{k}^* -\beta_k^*(t) )^2dt\\
& = & O(d_n^{-1}\| \hat{\bm{b}}- \bm{b}^* \|_2^2)  + O_p(M_n^{-2r}) =O_p(n^{-\frac{2r}{2r+1}}),
\end{eqnarray*}
where the first inequality follows from the triangle inequality, and the last equality is due to $d_n= O(M_n)$ and Condition 4.



Next, we examine the convergence rate of $\hat{\bm{\theta}}_1$. To verify $\| \hat{\bm{\theta}}_1\|_2 = O_p(1)$, it is sufficient to show that $\| \hat{\bm{\theta}}_1 - \tilde{\bm{\theta}}_1 \|_2 = o_p(1)$ under Conditions 1-4. Define:
\begin{equation*}
\tilde{Q}_i (\bm{\theta}_1, \tilde{\bm{\theta}}_1, \bm{\theta}_2) = \rho_\tau (\epsilon_i - \tilde{\bm{z}}_i^\top \bm{\theta}_1 - \tilde{\bm{\psi}}_2^\top \bm{\theta}_2 - u_{ni}) - \rho_\tau (\epsilon_i - \tilde{\bm{z}}_i^\top \tilde{\bm{\theta}}_1 - \tilde{\bm{\psi}}_2^\top \bm{\theta}_2 - u_{ni}).
\end{equation*}

We first show that for any positive constants $M$ and $C$,
\begin{equation}
\label{eq:A.7}
P\left(\inf_{\substack{\| \bm{\theta}_1 - \tilde{\bm{\theta}}_1 \|_2 \geq M \\ \|\bm{\theta}_2\|_2 \leq C\sqrt{d_n}} } \sum_{i = 1}^n \tilde{Q}_i (\bm{\theta}_1, \tilde{\bm{\theta}}_1, \bm{\theta}_2) > 0\right)
\rightarrow 1.
\end{equation}
Following the proof of Lemma 6 in \cite{sherwood2016partially}, we have:
\begin{equation*}
\sup_{\substack{\| \bm{\theta}_1 - \tilde{\bm{\theta}}_1 \|_2 \leq M \\ \|\bm{\theta}_2\|_2 \leq C\sqrt{d_n}} } \left| \sum_{i = 1}^n \tilde{Q}_i (\bm{\theta}_1, \tilde{\bm{\theta}}_1, \bm{\theta}_2) - \frac{1}{2} (\bm{\theta}_1 - \tilde{\bm{\theta}}_1)^\top \left(\frac{1}{n} \check{\bm{Z}}^{\top} \bm{B}_n \check{\bm{Z}}\right)(\bm{\theta}_1 - \tilde{\bm{\theta}}_1) (1+o_p(1)) \right|= o_p(1).
\end{equation*}
By Conditions 1 and 3, for any $\|\bm{\theta}_1 -\tilde{\bm{\theta}}_1 \|_2 > M$, 
$$
\frac{1}{2} (\bm{\theta}_1 - \tilde{\bm{\theta}}_1)^\top \left(\frac{1}{n} \check{\bm{Z}}^{\top} \bm{B}_n  \check{\bm{Z}} \right)(\bm{\theta}_1 - \tilde{\bm{\theta}}_1) > CM,
$$
for some positive constant C, and thus (\ref{eq:A.7}) holds.
Combining (\ref{eq:A.1}), (\ref{eq:A.7}) and Lemma \ref{lemma3}, we have that there exists a local minimizer $\hat{\bm{\theta}}_1$ of ({\ref{eq:6}}) such that $\|\hat{\bm{\theta}}_1\|_2  = O_p(1)$, and thus $\| \hat{\bm{\gamma}} - \bm{\gamma}^* \|_2 = O_p(\sqrt{1/n})$.

\noindent \emph{Proof of Theorem 1 (2)} 

We need to show that $\hat{\beta}_k(t) = 0$ for all $t \in D_k^{[2]}$ with probability tending to one. Denote $\hat{\bm{b}}_k^{[l]} = (\hat{b}_{k,l}, \cdots, \hat{b}_{k, l+ d})^\top$. We need to prove that the local minimizer $ (\hat{\bm{b}}^\top, \hat{\bm{\gamma}}^\top)^\top$ satisfies $\|\hat{\bm{b}}_k^{[l]}\|_2 = 0$ for all $l$ such that $I_l \subset D_k^{[2]}$ with probability tending to one for $k = 0, \cdots, q$. 
By the way of contradiction, assume that $\| \hat{\bm{b}}_{k}^{[l^\star]}\|_2\neq 0$ for some $l^{\star}$ with $I_{l^{\star}} \subset D_k^{[2]}$. 
Let $\tilde{\bm{b}}_k$ be the same as $\hat{\bm{b}}_k$ except that $\|\tilde{\bm{b}}_{k}^{[l^\star]}\|_2 =0$. Note that $\| \tilde{\bm{b}}_k^{[l]}\|_2 = 0$ is equivalent to $\| \tilde{\bm{b}}_{k}\|_{\bm{W}_l} = 0$ for $l = 1,\cdots, M_n$. 
Since $\|\bm{b}_{k}^{*[l^\star]}\|_2 = 0$ and $\| \bm{b}_{k}^* - \hat{\bm{b}}_k \|_2 = O_p(M_n/\sqrt{n}) $, we have $\|\hat{\bm{b}}_k^{[l^\star]} \|_2 = O(M_n/\sqrt{n})$, and thus $\|\hat{\bm{b}}_k \|_{\bm{W}_{l^\star}} = O(\sqrt{M_n/n})$ by $\lambda_{\max}{(\bm{W}_{l^{\star}})} = O(M_n^{-1})$. Below we prove that:
\begin{equation}
\label{eq:A.8}
\begin{aligned}
 &\frac{1}{n} \sum_{i= 1}^n  \rho_\tau(y_i - {\bm{\psi}}_i^\top \hat{\bm{b}}- {\bm{z}}_i^\top \hat{\bm{\gamma}} )    + \sum_{k=1}^q \sum_{l = 1}^{M_n} p_{\lambda_1} ( \|\hat{\bm{b}}_k\|_{\bm{W}_l}   ) +  \sum_{l = 1}^{M_n} p_{\lambda_2} ( \|\hat{\bm{b}}\|_{\bm{W}_l} ) + \eta \sum_{k=0}^q  \hat{ \bm{b}}_k^\top \bm{V} \hat{\bm{b}}_k \\
&>  \frac{1}{n} \sum_{i= 1}^n  \rho_\tau(y_i - {\bm{\psi}}_i^\top \tilde{\bm{b}}- {\bm{z}}_i^\top \hat{\bm{\gamma}} )    + \sum_{k=1}^q \sum_{l = 1}^{M_n} p_{\lambda_1} (  \|\tilde{\bm{b}}_k\|_{\bm{W}_l} )   +  \sum_{l = 1}^{M_n} p_{\lambda_2} (   \|\tilde{\bm{b}}\|_{\bm{W}_l}  )+ \eta \sum_{k=0}^q  \tilde{ \bm{b}}_k^\top \bm{V} \tilde{\bm{b}}_k ,
 \end{aligned}
 \end{equation}
with probability tending to one, and this leads to a contradiction. 
Therefore, we conclude that $\|\hat{\bm{b}}_k^{[l]}\|_2 = 0$ for all $l\subset D_k^{[2]}$ in probability. Furthermore, by the definition of $\hat{\beta}_k(t)$, we have $\hat{\beta}_k(t) = 0$ for all $t \in D_k^{[2]}$ with probability tending to one. 

By the convexity of the {quantile} loss function, we have
\begin{eqnarray}
\label{eq:A.9}
&&\frac{1}{n} \sum_{i = 1}^n (  \rho_\tau(y_i - \bm{\psi}_i^\top \hat{\bm{b}} - \bm{z}_i^\top \hat{\bm{\gamma}}) -   \rho_\tau(y_i - \bm{\psi}_i^\top \tilde{\bm{b}} - \bm{z}_i^\top \hat{\bm{\gamma}})) \nonumber\\
&\geq& - \frac{1}{n} \sum_{i = 1}^n ( \tau-1{(y_i \leq \bm{\psi}_i^\top \tilde{\bm{b}} + \bm{z}_i^\top \hat{\bm{\gamma}})} ) \bm{\psi}_{i}^{ [kl^{\star}]\top} \hat{\bm{b}}_{k}^{[l^{\star}]}\\
&=& -\frac{1}{n}  \sum_{i=1}^n \left(\tau-1{(\epsilon_i\leq 0)} \right)  \bm{\psi}_{i}^{ [kl^{\star}]\top} \hat{\bm{b}}_{k}^{[l^{\star}]}  \nonumber\\
&& \quad - \frac{1}{n} \sum_{i=1}^n (1{(\epsilon_i \leq 0)} - 1{(\epsilon_i \leq \bm{\psi}_i^\top (\tilde{\bm{b}}-\bm{b}^*) + \bm{z}_i^\top (\hat{\bm{\gamma}}-\bm{\gamma}^*) + u_{ni})} ) \bm{\psi}_{i}^{ [kl^{\star}]\top} \hat{\bm{b}}_{k}^{[l^{\star}]}, \nonumber
\end{eqnarray}
where $ \bm{\psi}_{i}^{ [kl^{\star}]} = (\psi_{i, k(M_n+d) + l^{\star}}, \cdots, \psi_{i, k(M_n+d) + l^{\star} + d})^\top$.
For the first term on the right hand side of the last equation of (\ref{eq:A.9}), by Conditions 1 and 3, we have: 
$$\frac{1}{n}  \sum_{i=1}^n \left(\tau-1{(\epsilon_i\leq 0)} \right)  \bm{\psi}_{i}^{ [kl^{\star}]\top} \hat{\bm{b}}_{k}^{[l^{\star}]} = O_p(n^{-1/2} \| \hat{\bm{b}}_k \|_{\bm{W}_{l^{\star}}}).$$
For the second term, since $\sup_i |\bm{\psi}_i^\top (\tilde{\bm{b}} - \bm{b}^*) + \bm{z}_i^\top (\hat{\bm{\gamma}}- \bm{\gamma}^*) + u_{ni}| = O_p(\sqrt{M_n/n})$, we have: 
\begin{eqnarray*}
&&E\left[ \left[ \frac{1}{n} \sum_{i=1}^n \left( 1{(\epsilon_i \leq 0)} - 1{(\epsilon_i \leq \bm{\psi}^\top_i (\tilde{\bm{b}}- \bm{b}^*) + \bm{z}_i^\top (\hat{\bm{\gamma}}-\bm{\gamma}^*) +u_{ni}}) \right) \bm{\psi}_{i}^{ [kl^{\star}]\top} \hat{\bm{b}}_{k}^{[l^{\star}]}   \right]^2 \right]\\
& \leq& E\left[ \left[ \frac{1}{n} \sum_{i=1}^n \left| 1{(\epsilon_i \leq C \sqrt{M_n/n})} - 1{(\epsilon_i \leq -C \sqrt{M_n/n})} \right| \times \left| \bm{\psi}_{i}^{ [kl^{\star}]\top} \hat{\bm{b}}_{k}^{[l^{\star}]}\right|   \right]^2 \right]\\
& =& E\left[  \frac{1}{n^2} \sum_{i=1}^n 1{(-C\sqrt{M_n/n} \leq \epsilon_i \leq C\sqrt{M_n/n})} \times \left| \bm{\psi}_{i}^{ [kl^{\star}]\top} \hat{\bm{b}}_{k}^{[l^{\star}]}\right|^2  \right]\\
&&\quad + \sum_{i \neq i^\prime} \frac{1}{n^2} E\left[ 1{( -C\sqrt{M_n/n} \leq \epsilon_i \leq C\sqrt{M_n/n})}   1{( -C\sqrt{M_n/n} \leq \epsilon_{i^\prime} \leq C\sqrt{M_n/n})}\right. \\
&&\quad \quad \quad  \quad \quad  \quad \quad \left.\times \left| \bm{\psi}_{i}^{[kl^{\star}]\top}  \bm{\psi}_{i^\prime}^{[kl^{\star}]}  \right| \times \|\hat{\bm{b}}_{k}^{[l^{\star}]}\|_2^2  \right]\\
& \leq &{C}{n^{-2}} \left(n M_n^{1/2}n^{-1/2} + n^2 M_n n^{-1} \right) \| \hat{\bm{b}}_k \|_{\bm{W}_{l^{\star}}}^2 =O({M_n}{n}^{-1}) \| \hat{\bm{b}}_k \|_{\bm{W}_{l^{\star}}}^2.
\end{eqnarray*}
Therefore, the second term is bounded by $O_p(\sqrt{M_n/n}\| \hat{\bm{b}}_k \|_{\bm{W}_{l^{\star}}})$, which dominates the first term. Also, since $\lambda_{\max}(\bm{V}) = O(M_n^{-1})$, we have: 
\begin{eqnarray}
\label{eq:A.10}
 \eta \sum_{k=0}^q \hat{\bm{b}}_k^\top \bm{V} \hat{\bm{b}}_k - \eta \sum_{k=0}^q \tilde{\bm{b}}_k^\top \bm{V} \tilde{\bm{b}}_k
&=& \eta \sum_{k=0}^q [(\hat{\bm{b}}_k-\tilde{\bm{b}}_k)^\top \bm{V}(\hat{\bm{b}}_k-\tilde{\bm{b}}_k)+2(\hat{\bm{b}}_k-\tilde{\bm{b}}_k)^\top \bm{V}\tilde{\bm{b}}_k ] \nonumber\\
&=& \eta \sum_{k=0}^q \hat{\bm{b}}_k^{[l^{\star}]\top} \bm{V}^{[l^{\star}]} \hat{\bm{b}}_k^{[l^{\star}]} \leq C\eta M_n^{-1} \|\hat{\bm{b}}_k^{[l^{\star}]}\|_2^2 \nonumber\\
&=&O_p(\eta\sqrt{M_n/n} \| \hat{\bm{b}}_k \|_{\bm{W}_{l^{\star}}}) ,
\end{eqnarray}
where $\bm{V}^{[l^{\star}]}$ is the submatrix of $\bm{V}$ with entries $v_{ij}, i,j = l^{\star}, \cdots, l^{\star}+d$. Since $\eta = o_p( n^{-1/2})$, together with the above discussions on (\ref{eq:A.9}) and (\ref{eq:A.10}), it follows that:
\begin{equation}
\label{eq:A.11}
\begin{aligned}
&\frac{1}{n} \sum_{i = 1}^n (  \rho_\tau(y_i - \bm{\psi}_i^\top \hat{\bm{b}} - \bm{z}_i^\top \hat{\bm{\gamma}}) -   \rho_\tau(y_i - \bm{\psi}_i^\top \tilde{\bm{b}} - \bm{z}_i^\top \hat{\bm{\gamma}})) +\eta \sum_{k=0}^q \hat{\bm{b}}_k^\top \bm{V} \hat{\bm{b}}_k - \eta \sum_{k=0}^q \tilde{\bm{b}}_k^\top \bm{V} \tilde{\bm{b}}_k \\
&= O_p (\sqrt{M_n/n}\| \hat{\bm{b}}_k \|_{\bm{W}_{l^{\star}}}).
\end{aligned}
\end{equation}

Next, we examine the functional sparse group penalty. For $I_{l^{\star}} \subset D_0^{[2]}$, we have $\|\hat{\bm{b}}_0 \|_{\bm{W}_{l^{\star}}} \geq \|\tilde{\bm{b}}_0 \|_{\bm{W}_{l^{\star}}} = 0$ and $\|\hat{\bm{b}}_k \|_{\bm{W}_{l^{\star}}} = \|\tilde{\bm{b}}_k \|_{\bm{W}_{l^{\star}}} = 0$ for all $k = 1, \cdots, q$ by the design of the sparse group penalty. And thus
\begin{equation}
\label{eq:A.12}
\sum_{k = 1}^q \sum_{l = 1}^{M_n} p_{\lambda_1} (\|\hat{\bm{b}}_k \|_{\bm{W}_{l}})= \sum_{k = 1}^q \sum_{l = 1}^{M_n} p_{\lambda_1} (\|\tilde{\bm{b}}_k \|_{\bm{W}_{l}}). 
\end{equation}
Also, for $l^\star$ such that $I_{l^\star}\subset D_0^{[2]}$, we have:
\begin{equation}
\label{eq:A.13}
\sum_{l = 1}^{M_n} p_{\lambda_2} (\|\hat{\bm{b}} \|_{\bm{W}_{l}})- \sum_{l = 1}^{M_n} p_{\lambda_2} (\|\tilde{\bm{b}} \|_{\bm{W}_{l}}) = p_{\lambda_2} (\|\hat{\bm{b}} \|_{\bm{W}_{l^{\star}}})  \geq \frac{\lambda_2}{2}\|\hat{\bm{b}} \|_{\bm{W}_{l^{\star}}}\geq \frac{\lambda_2}{2}\|\hat{\bm{b}}_k \|_{\bm{W}_{l^{\star}}},
\end{equation}
by $\|\hat{\bm{b}}\|_{\bm{W}_{l^\star}} = O(\sqrt{M_n/n})$, $M_n^{-r}/\min(\lambda_1, \lambda_2) = o(1)$, and  Condition 4. 
For $I_{l^{\star}} \subset D_k^{[2]} \text{ and } k \neq 0$, we have $\| \hat{\bm{b}}_k \|_{\bm{W}_{l^{\star}}} \geq\| \tilde{\bm{b}}_k \|_{\bm{W}_{l^{\star}}} = 0$ and $\| \hat{\bm{b}}_{k^\prime} \|_{\bm{W}_{l^{\star}}} = \| \tilde{\bm{b}}_{k^\prime} \|_{\bm{W}_{l^{\star}}}$ for $k^\prime \neq k$.
Since $\|\hat{\bm{b}}_k\|_{\bm{W}_{l^{\star}}} = O(\sqrt{M_n/n})$ and $M_n^{-r}/\min(\lambda_1, \lambda_2) = o(1)$, we have: 
\begin{equation}
\label{eq:A.14}
\sum_{k = 1}^q \sum_{l = 1}^{M_n} (p_{\lambda_1} (\|\hat{\bm{b}}_k \|_{\bm{W}_{l}}) - p_{\lambda_1} (\|\tilde{\bm{b}}_k \|_{\bm{W}_{l}})) = p_{\lambda_1} (\|\hat{\bm{b}}_k \|_{\bm{W}_{l^{\star}}}) \geq \frac{\lambda_1}{2}  \|\hat{\bm{b}}_k \|_{\bm{W}_{l^{\star}}}.
\end{equation}
For $I_{l^{\star}} \subset D_k^{[2]} \text{ and } k \neq 0$, when $\| \tilde{\bm{b}} \|_{\bm{W}_{l^{\star}}} \geq \lambda_2 \xi$, we have $\| \hat{\bm{b}} \|_{\bm{W}_{l^{\star}}} \geq \lambda_2 \xi$ and 
\begin{equation}
\label{eq:A.15}
\sum_{l = 1}^{M_n} (p_{\lambda_2}(\| \hat{\bm{b}} \|_{\bm{W}_{l}}) - p_{\lambda_2}(\| \tilde{\bm{b}} \|_{\bm{W}_{l}})) = p_{\lambda_2}(\| \hat{\bm{b}} \|_{\bm{W}_{l^{\star}}}) - p_{\lambda_2}(\| \tilde{\bm{b}} \|_{\bm{W}_{l^{\star}}}) = 0.
\end{equation}
For $I_{l^{\star}} \subset D_k^{[2]} \text{ and } k \neq 0$, when $\| \tilde{\bm{b}} \|_{\bm{W}_{l^{\star}}} < \lambda_2 \xi$, we have
\begin{equation}
\label{eq:A.16}
\sum_{l = 1}^{M_n} (p_{\lambda_2}(\| \hat{\bm{b}} \|_{\bm{W}_{l}}) - p_{\lambda_2}(\| \tilde{\bm{b}} \|_{\bm{W}_{l}})) = \lambda_2 \int_{\| \tilde{\bm{b}} \|_{\bm{W}_{l^\star}}}^{\min(\lambda_2\xi, \| \hat{\bm{b}} \|_{\bm{W}_{l^\star}})} \left(1-\frac{t}{\lambda_2\xi} \right)_+ dt \geq 0.
\end{equation}
Combining (\ref{eq:A.14}) - (\ref{eq:A.16}), we obtain that for $I_{l^{\star}} \subset D_k^{[2]}, k = 0,\cdots, q$, 
\begin{equation*}
\sum_{k = 1}^q \sum_{l = 1}^{M_n} (p_{\lambda_1}(\| \hat{\bm{b}}_k \|_{\bm{W}_{l}})-  p_{\lambda_1}(\| \tilde{\bm{b}}_k \|_{\bm{W}_{l}})) + \sum_{l = 1}^{M_n} (p_{\lambda_2}(\| \hat{\bm{b}} \|_{\bm{W}_{l}})-p_{\lambda_2}(\| \tilde{\bm{b}} \|_{\bm{W}_{l}})) \geq \frac{1}{2} \min(\lambda_1, \lambda_2) \| \hat{\bm{b}}_k \|_{\bm{W}_{l^{\star}}}.
\end{equation*}
Combining the above result with (\ref{eq:A.11}), $M_n^{-r}/\min(\lambda_1, \lambda_2) = o(1)$, and Condition 4, we prove (\ref{eq:A.8}) with probability tending to one. This completes the proof.

\clearpage
\subsection*{III. Additional numerical results}
\begin{table}[h]
\setlength{\abovecaptionskip}{0pt}
\setlength{\belowcaptionskip}{5pt}
\centering
\caption{ Scenario I under Case 2: mean (sd) based on 100 replicates. For the quantile-based methods, $\tau = 0.5$.}
\label{tab-2}
\resizebox{450pt}{60mm}
{
\begin{tabular}{cccccccc}
\hline
$n$   &           & Alt.1        & Alt.2     &Alt.3  & Alt.4   &Alt.5     & Proposed\\
\hline
    &           & \multicolumn{6}{c}{$\text{ISE}_0 (\times 10^{2})$}                                                 \\
300 & $\beta_0$ & 10.493(8.442)  & 2.012(4.538)   & 2.012(4.538)   & 8.146(5.853)   & 1.433(4.109)   & 1.433(4.109)   \\
    & $\beta_1$ & 12.101(9.790)  & 5.474(9.373)   & 3.730(6.966)   & 8.487(7.166)   & 1.683(4.468)   & 1.429(4.155)   \\
    & $\beta_2$ & 10.785(8.806)  & 4.577(8.575)   & 3.505(7.119)   & 6.868(5.097)   & 1.400(3.650)   & 1.053(2.812)   \\
500 & $\beta_0$ & 7.896(5.606)   & 1.723(4.768)   & 1.012(4.486)   & 5.330(3.898)   & 0.813(2.203)   & 0.741(2.338)   \\
    & $\beta_1$ & 6.918(5.365)   & 2.780(6.136)   & 1.545(5.678)   & 4.238(2.501)   & 0.224(0.744)   & 0.184(0.702)   \\
    & $\beta_2$ & 8.477(17.375)  & 6.062(36.792)  & 5.032(36.771)  & 4.794(3.148)   & 0.329(1.289)   & 0.538(1.900)   \\
    &           & \multicolumn{6}{c}{$\text{ISE}_1 (\times 10^{2})$}                                                 \\
300 & $\beta_0$ & 36.417(24.229) & 37.843(26.623) & 37.843(26.623) & 25.041(16.974) & 28.583(20.912) & 28.583(20.912) \\
    & $\beta_1$ & 18.563(14.815) & 18.940(15.699) & 19.167(15.704) & 11.394(10.309) & 12.484(11.503) & 14.196(13.308) \\
    & $\beta_2$ & 18.751(21.966) & 18.976(20.397) & 19.377(20.527) & 11.266(13.571) & 12.363(13.028) & 15.932(20.021) \\
500 & $\beta_0$ & 21.969(19.410) & 24.553(27.016) & 25.836(26.973) & 14.997(8.513)  & 17.765(10.305) & 18.207(11.352) \\
    & $\beta_1$ & 9.430(7.378)   & 11.483(9.900)  & 12.761(9.607)  & 7.438(6.724)   & 8.232(6.798)   & 8.674(6.646)   \\
    & $\beta_2$ & 11.958(21.889) & 14.100(30.410) & 14.360(30.392) & 7.330(6.497)   & 7.672(7.130)   & 9.201(7.945)   \\
        &           & \multicolumn{6}{c}{$\text{RMSE}_{\gamma}$}                                                           \\
300 &           & 0.135(0.105)   & 0.126(0.104)   & 0.130(0.102)   & 0.100(0.080)   & 0.091(0.071)   & 0.089(0.073)   \\
500 &           & 0.114(0.091)   & 0.112(0.087)   & 0.112(0.088)   & 0.083(0.060)   & 0.083(0.060)   & 0.087(0.062)   \\
    &           & \multicolumn{6}{c}{fTPR}                                                                           \\
300 & $\beta_0$ & 1.000(0.001)   & 0.989(0.026)   & 0.988(0.027)   & 1.000(0.000)   & 0.981(0.034)   & 0.988(0.029)   \\
    & $\beta_1$ & 1.000(0.001)   & 0.976(0.048)   & 0.990(0.036)   & 1.000(0.001)   & 0.972(0.055)   & 0.989(0.038)   \\
    & $\beta_2$ & 1.000(0.004)   & 0.981(0.053)   & 0.995(0.019)   & 1.000(0.001)   & 0.964(0.109)   & 0.990(0.043)   \\
500 & $\beta_0$ & 1.000(0.000)   & 0.990(0.029)   & 0.996(0.010)   & 1.000(0.000)   & 0.992(0.018)   & 0.991(0.020)   \\
    & $\beta_1$ & 1.000(0.000)   & 0.994(0.027)   & 0.991(0.026)   & 1.000(0.000)   & 0.993(0.028)   & 0.986(0.031)   \\
    & $\beta_2$ & 1.000(0.000)   & 0.992(0.030)   & 0.996(0.013)   & 1.000(0.000)   & 0.993(0.026)   & 0.993(0.026)  \\
    &           & \multicolumn{6}{c}{fTNR}                                                                           \\
300 & $\beta_0$ & 0.002(0.005)   & 0.490(0.326)   & 0.661(0.301)   & 0.002(0.003)   & 0.601(0.321)   & 0.728(0.314)   \\
    & $\beta_1$ & 0.003(0.004)   & 0.674(0.307)   & 0.768(0.268)   & 0.003(0.003)   & 0.830(0.233)   & 0.871(0.234)   \\
    & $\beta_2$ & 0.003(0.006)   & 0.715(0.300)   & 0.802(0.272)   & 0.003(0.002)   & 0.840(0.225)   & 0.891(0.195)   \\
500 & $\beta_0$ & 0.004(0.008)   & 0.589(0.309)   & 0.706(0.212)   & 0.003(0.004)   & 0.709(0.252)   & 0.754(0.270)   \\
    & $\beta_1$ & 0.003(0.005)   & 0.746(0.299)   & 0.861(0.179)   & 0.004(0.007)   & 0.905(0.119)   & 0.925(0.130)   \\
    & $\beta_2$ & 0.004(0.005)   & 0.753(0.303)   & 0.853(0.194)   & 0.003(0.003)   & 0.912(0.064)   & 0.912(0.159)   \\
                     \hline
\end{tabular}
}
\end{table}

\clearpage
\begin{table}[h]
\setlength{\abovecaptionskip}{0pt}
\setlength{\belowcaptionskip}{5pt}
\centering
\caption{Scenario I under Case 3: mean (sd) based on 100 replicates. For the quantile-based methods, $\tau = 0.3$.}
\label{tab-3}
\resizebox{420pt}{60mm}
{
\begin{tabular}{cccccccc}
\hline
$n$   &           & Alt.1        & Alt.2     &Alt.3  & Alt.4   &Alt.5     & Proposed\\
\hline
    &           & \multicolumn{6}{c}{$\text{ISE}_0 (\times 10^{2})$}                                              \\
300 & $\beta_0$ & 2.868(1.821)   & 0.249(0.988)   & 0.195(0.940)   & 0.472(0.311)  & 0.014(0.015)  & 0.007(0.008)  \\
    & $\beta_1$ & 5.730(3.193)   & 2.993(5.608)   & 2.583(4.649)   & 1.602(1.101)  & 0.067(0.252)  & 0.031(0.136)  \\
    & $\beta_2$ & 3.774(3.799)   & 1.328(4.369)   & 1.634(4.465)   & 0.789(0.965)  & 0.013(0.082)  & 0.066(0.476)  \\
500 & $\beta_0$ & 2.397(1.708)   & 0.031(0.177)   & 0.029(0.186)   & 0.245(0.127)  & 0.013(0.012)  & 0.008(0.024)  \\
    & $\beta_1$ & 3.791(2.452)   & 1.115(2.226)   & 1.031(1.968)   & 0.832(0.695)  & 0.008(0.020)  & 0.002(0.010)  \\
    & $\beta_2$ & 2.469(1.873)   & 0.332(1.145)   & 0.371(1.290)   & 0.345(0.259)  & 0.003(0.005)  & 0.001(0.004)  \\
    &           & \multicolumn{6}{c}{$\text{ISE}_1 (\times 10^{2})$}                                              \\
300 & $\beta_0$ & 10.983(7.331)  & 13.227(10.367) & 12.700(9.883)  & 2.060(1.673)  & 1.182(1.048)  & 1.132(0.973)  \\
    & $\beta_1$ & 19.786(23.473) & 22.760(25.114) & 22.748(27.854) & 8.944(11.601) & 8.294(11.829) & 8.141(11.990) \\
    & $\beta_2$ & 4.117(3.009)   & 4.802(5.488)   & 5.202(5.639)   & 0.979(0.668)  & 0.455(0.371)  & 0.502(0.453)  \\
500 & $\beta_0$ & 8.707(4.691)   & 9.069(6.688)   & 9.654(6.500)   & 1.045(0.593)  & 0.570(0.384)  & 0.679(0.440)  \\
    & $\beta_1$ & 12.106(14.426) & 15.345(16.454) & 15.402(16.512) & 5.110(5.191)  & 4.649(4.600)  & 4.803(4.928)  \\
    & $\beta_2$ & 3.021(1.586)   & 3.097(2.661)   & 3.726(3.473)   & 0.472(0.315)  & 0.274(0.229)  & 0.344(0.307)  \\
        &           & \multicolumn{6}{c}{$\text{RMSE}_{\gamma}$}                                                        \\
300 &           & 0.221(0.087)   & 0.221(0.083)   & 0.219(0.084)   & 0.028(0.022)  & 0.021(0.017)  & 0.022(0.017)  \\
500 &           & 0.209(0.072)   & 0.210(0.073)   & 0.210(0.073)   & 0.019(0.013)  & 0.015(0.010)  & 0.016(0.012)  \\
    &           & \multicolumn{6}{c}{fTPR}                                                                          \\
300 & $\beta_0$ & 1.000(0.000)   & 0.991(0.018)   & 0.992(0.019)   & 1.000(0.000)  & 1.000(0.000)  & 1.000(0.000)  \\
    & $\beta_1$ & 1.000(0.000)   & 0.982(0.048)   & 0.976(0.076)   & 1.000(0.000)  & 0.997(0.017)  & 0.998(0.014)  \\
    & $\beta_2$ & 1.000(0.000)   & 0.996(0.013)   & 0.991(0.022)   & 1.000(0.000)  & 1.000(0.000)  & 1.000(0.000)  \\
500 & $\beta_0$ & 1.000(0.000)   & 0.995(0.011)   & 0.993(0.017)   & 1.000(0.000)  & 1.000(0.000)  & 1.000(0.000)  \\
    & $\beta_1$ & 1.000(0.000)   & 0.979(0.057)   & 0.972(0.059)   & 1.000(0.000)  & 1.000(0.001)  & 1.000(0.001)  \\
    & $\beta_2$ & 1.000(0.000)   & 0.997(0.011)   & 0.991(0.024)   & 1.000(0.000)  & 1.000(0.000)  & 1.000(0.000)  \\
    &           & \multicolumn{6}{c}{fTNR}                                                                          \\
300 & $\beta_0$ & 0.001(0.002)   & 0.825(0.204)   & 0.829(0.182)   & 0.001(0.003)  & 0.776(0.043)  & 0.793(0.035)  \\
    & $\beta_1$ & 0.000(0.001)   & 0.718(0.262)   & 0.770(0.201)   & 0.001(0.001)  & 0.909(0.080)  & 0.937(0.049)  \\
    & $\beta_2$ & 0.001(0.001)   & 0.890(0.174)   & 0.896(0.138)   & 0.001(0.001)  & 0.933(0.030)  & 0.933(0.057)  \\
500 & $\beta_0$ & 0.001(0.001)   & 0.863(0.080)   & 0.886(0.114)   & 0.002(0.003)  & 0.785(0.035)  & 0.809(0.043)  \\
    & $\beta_1$ & 0.000(0.001)   & 0.809(0.173)   & 0.854(0.168)   & 0.001(0.001)  & 0.943(0.030)  & 0.955(0.020)  \\
    & $\beta_2$ & 0.000(0.001)   & 0.930(0.085)   & 0.938(0.094)   & 0.002(0.002)  & 0.939(0.014)  & 0.948(0.012) \\
                     \hline
\end{tabular}
}
\end{table}

\clearpage
\begin{table}[h]
\setlength{\abovecaptionskip}{0pt}
\setlength{\belowcaptionskip}{5pt}
\centering
\caption{Scenario I under Case 3: mean (sd) based on 100 replicates. For the quantile-based methods, $\tau = 0.5$.}
\label{tab-4}
\resizebox{420pt}{60mm}
{
\begin{tabular}{cccccccc}
\hline
$n$   &           & Alt.1        & Alt.2     &Alt.3  & Alt.4   &Alt.5     & Proposed\\
\hline
    &           & \multicolumn{6}{c}{$\text{ISE}_0 (\times 10^{2})$}                                           \\
300 & $\beta_0$ & 2.697(1.892)   & 0.079(0.384)   & 0.045(0.243)   & 0.485(0.301) & 0.018(0.030) & 0.007(0.006) \\
    & $\beta_1$ & 4.981(3.161)   & 2.547(5.349)   & 2.135(5.060)   & 1.628(1.227) & 0.053(0.175) & 0.065(0.245) \\
    & $\beta_2$ & 3.301(2.561)   & 0.888(2.575)   & 1.119(2.698)   & 0.794(0.705) & 0.014(0.053) & 0.006(0.044) \\
500 & $\beta_0$ & 2.038(1.510)   & 0.113(0.483)   & 0.013(0.105)   & 0.263(0.153) & 0.010(0.009) & 0.005(0.005) \\
    & $\beta_1$ & 3.587(2.548)   & 1.189(1.921)   & 0.808(1.555)   & 0.936(0.605) & 0.020(0.083) & 0.020(0.092) \\
    & $\beta_2$ & 2.181(1.581)   & 0.320(1.199)   & 0.313(1.237)   & 0.376(0.320) & 0.003(0.005) & 0.001(0.002) \\
    &           & \multicolumn{6}{c}{$\text{ISE}_1 (\times 10^{2})$}                                           \\
300 & $\beta_0$ & 9.413(6.606)   & 10.574(8.484)  & 10.629(8.146)  & 2.130(1.724) & 1.086(0.929) & 1.060(0.937) \\
    & $\beta_1$ & 15.693(18.630) & 18.776(21.860) & 17.804(21.592) & 6.430(8.397) & 5.291(7.408) & 5.119(7.025) \\
    & $\beta_2$ & 4.031(2.731)   & 4.373(3.918)   & 4.883(4.300)   & 0.902(0.638) & 0.433(0.357) & 0.445(0.375) \\
500 & $\beta_0$ & 7.600(4.344)   & 7.386(5.456)   & 8.241(5.269)   & 1.167(0.643) & 0.591(0.423) & 0.633(0.433) \\
    & $\beta_1$ & 11.086(11.832) & 12.879(12.976) & 13.214(12.947) & 4.871(5.566) & 3.898(4.967) & 4.022(4.662) \\
    & $\beta_2$ & 2.667(1.868)   & 2.783(2.685)   & 3.349(3.062)   & 0.513(0.380) & 0.237(0.178) & 0.272(0.196) \\
        &           & \multicolumn{6}{c}{$\text{RMSE}_{\gamma}$}                                                     \\
300 &           & 0.066(0.048)   & 0.065(0.047)   & 0.064(0.046)   & 0.029(0.022) & 0.037(0.030) & 0.020(0.017) \\
500 &           & 0.053(0.034)   & 0.053(0.037)   & 0.054(0.038)   & 0.020(0.015) & 0.014(0.010) & 0.015(0.011) \\
    &           & \multicolumn{6}{c}{fTPR}                                                                       \\
300 & $\beta_0$ & 1.000(0.000)   & 0.996(0.011)   & 0.995(0.012)   & 1.000(0.000) & 1.000(0.000) & 1.000(0.000) \\
    & $\beta_1$ & 1.000(0.000)   & 0.988(0.038)   & 0.985(0.033)   & 1.000(0.000) & 0.999(0.011) & 1.000(0.004) \\
    & $\beta_2$ & 1.000(0.000)   & 0.996(0.011)   & 0.992(0.019)   & 1.000(0.000) & 1.000(0.001) & 1.000(0.000) \\
500 & $\beta_0$ & 1.000(0.000)   & 0.997(0.009)   & 0.995(0.010)   & 1.000(0.000) & 1.000(0.000) & 1.000(0.000) \\
    & $\beta_1$ & 1.000(0.000)   & 0.990(0.032)   & 0.983(0.036)   & 1.000(0.000) & 1.000(0.001) & 0.999(0.004) \\
    & $\beta_2$ & 1.000(0.000)   & 0.998(0.006)   & 0.994(0.013)   & 1.000(0.000) & 1.000(0.000) & 1.000(0.000) \\
    &           & \multicolumn{6}{c}{fTNR}                                                                       \\
300 & $\beta_0$ & 0.000(0.001)   & 0.834(0.104)   & 0.860(0.132)   & 0.001(0.002) & 0.772(0.049) & 0.796(0.030) \\
    & $\beta_1$ & 0.000(0.001)   & 0.735(0.236)   & 0.802(0.187)   & 0.001(0.001) & 0.909(0.083) & 0.926(0.075) \\
    & $\beta_2$ & 0.001(0.001)   & 0.890(0.146)   & 0.901(0.126)   & 0.001(0.001) & 0.928(0.041) & 0.944(0.023) \\
500 & $\beta_0$ & 0.001(0.002)   & 0.814(0.155)   & 0.897(0.067)   & 0.002(0.002) & 0.787(0.033) & 0.813(0.029) \\
    & $\beta_1$ & 0.000(0.001)   & 0.762(0.239)   & 0.878(0.144)   & 0.001(0.001) & 0.937(0.048) & 0.947(0.036) \\
    & $\beta_2$ & 0.000(0.001)   & 0.906(0.137)   & 0.948(0.079)   & 0.002(0.002) & 0.940(0.013) & 0.950(0.009) \\
                     \hline
\end{tabular}
}
\end{table}

\clearpage
\begin{table}[h]
\setlength{\abovecaptionskip}{0pt}
\setlength{\belowcaptionskip}{5pt}
\centering
\caption{Scenario I under Case 3: mean (sd) based on 100 replicates. For the quantile-based methods, $\tau = 0.7$.}
\label{tab-5}
\resizebox{420pt}{60mm}
{
\begin{tabular}{cccccccc}
\hline
$n$   &           & Alt.1        & Alt.2     &Alt.3  & Alt.4   &Alt.5     & Proposed\\
\hline
    &           & \multicolumn{6}{c}{$\text{ISE}_0 (\times 10^{2})$}                                           \\
300 & $\beta_0$ & 2.974(2.044)   & 0.173(0.544)   & 0.071(0.374)   & 0.449(0.315) & 0.015(0.015) & 0.010(0.011) \\
    & $\beta_1$ & 5.555(4.347)   & 2.825(6.555)   & 2.698(6.350)   & 1.609(1.076) & 0.102(0.370) & 0.100(0.398) \\
    & $\beta_2$ & 3.464(2.341)   & 0.506(1.962)   & 0.994(2.512)   & 0.794(0.744) & 0.032(0.176) & 0.046(0.265) \\
500 & $\beta_0$ & 2.255(1.649)   & 0.087(0.289)   & 0.116(0.555)   & 0.244(0.126) & 0.011(0.011) & 0.005(0.005) \\
    & $\beta_1$ & 4.016(3.093)   & 1.590(3.059)   & 1.404(2.473)   & 0.845(0.652) & 0.013(0.094) & 0.007(0.064) \\
    & $\beta_2$ & 2.466(2.001)   & 0.448(1.402)   & 0.591(1.596)   & 0.362(0.262) & 0.003(0.004) & 0.001(0.001) \\
    &           & \multicolumn{6}{c}{$\text{ISE}_1 (\times 10^{2})$}                                           \\
300 & $\beta_0$ & 10.082(7.045)  & 12.193(9.736)  & 11.593(8.968)  & 1.946(1.971) & 1.238(1.487) & 1.221(1.371) \\
    & $\beta_1$ & 17.222(21.815) & 21.428(26.327) & 20.306(24.611) & 6.689(6.976) & 6.246(6.744) & 6.558(7.101) \\
    & $\beta_2$ & 4.586(2.880)   & 5.329(3.963)   & 5.427(4.056)   & 0.830(0.648) & 0.459(0.457) & 0.460(0.400) \\
500 & $\beta_0$ & 8.083(4.343)   & 8.494(5.980)   & 8.496(5.556)   & 1.015(0.566) & 0.621(0.450) & 0.667(0.455) \\
    & $\beta_1$ & 12.375(13.481) & 14.495(14.698) & 14.452(15.006) & 5.086(4.879) & 4.556(4.963) & 4.979(5.090) \\
    & $\beta_2$ & 3.081(2.565)   & 3.393(3.144)   & 3.630(3.384)   & 0.460(0.256) & 0.294(0.292) & 0.315(0.262) \\
        &           & \multicolumn{6}{c}{$\text{RMSE}_{\gamma}$}                                                     \\
300 &           & 0.190(0.087)   & 0.192(0.085)   & 0.192(0.086)   & 0.030(0.022) & 0.021(0.016) & 0.021(0.017) \\
500 &           & 0.220(0.064)   & 0.221(0.066)   & 0.221(0.066)   & 0.019(0.014) & 0.015(0.011) & 0.016(0.012) \\
    &           & \multicolumn{6}{c}{fTPR}                                                                       \\
300 & $\beta_0$ & 1.000(0.000)   & 0.990(0.021)   & 0.992(0.015)   & 1.000(0.000) & 1.000(0.000) & 1.000(0.000) \\
    & $\beta_1$ & 1.000(0.000)   & 0.974(0.107)   & 0.985(0.031)   & 1.000(0.000) & 1.000(0.002) & 1.000(0.004) \\
    & $\beta_2$ & 1.000(0.000)   & 0.990(0.022)   & 0.989(0.022)   & 1.000(0.000) & 1.000(0.000) & 1.000(0.000) \\
500 & $\beta_0$ & 1.000(0.000)   & 0.995(0.012)   & 0.996(0.009)   & 1.000(0.000) & 1.000(0.000) & 1.000(0.000) \\
    & $\beta_1$ & 1.000(0.000)   & 0.993(0.022)   & 0.990(0.022)   & 1.000(0.000) & 1.000(0.000) & 1.000(0.001) \\
    & $\beta_2$ & 1.000(0.000)   & 0.998(0.009)   & 0.994(0.016)   & 1.000(0.000) & 1.000(0.000) & 1.000(0.000) \\
    &           & \multicolumn{6}{c}{fTNR}                                                                       \\
300 & $\beta_0$ & 0.001(0.002)   & 0.817(0.201)   & 0.854(0.145)   & 0.001(0.002) & 0.770(0.041) & 0.788(0.034) \\
    & $\beta_1$ & 0.000(0.001)   & 0.737(0.292)   & 0.793(0.190)   & 0.001(0.001) & 0.908(0.083) & 0.925(0.080) \\
    & $\beta_2$ & 0.001(0.002)   & 0.897(0.167)   & 0.884(0.157)   & 0.001(0.002) & 0.924(0.043) & 0.933(0.055) \\
500 & $\beta_0$ & 0.001(0.001)   & 0.820(0.151)   & 0.837(0.172)   & 0.002(0.004) & 0.786(0.037) & 0.812(0.030) \\
    & $\beta_1$ & 0.001(0.001)   & 0.766(0.263)   & 0.828(0.201)   & 0.001(0.001) & 0.943(0.024) & 0.953(0.028) \\
    & $\beta_2$ & 0.001(0.002)   & 0.892(0.149)   & 0.917(0.116)   & 0.002(0.002) & 0.942(0.013) & 0.951(0.009) \\
                     \hline
\end{tabular}
}
\end{table}

\clearpage
\subsection*{IV. Additional data analysis results}

We conduct exploratory regression analysis with the proposed penalty. For the lack-of-fit, we consider the mean-based (Alt.3) and the proposed quantile-based. With the proposed approach, we consider $\tau=0.3, 0.5, 0.7$. In the left panel of Figure \ref{Fig:5}, we plot the estimated densities. It is observed that the residuals are left-skewed, which suggests the sensibility of quantile-based analysis. Different quantiles lead to different results, which has been commonly observed in the literature. In addition, the {mean} estimation is closer to the proposed estimation with $\tau=0.3$, compared to the other two quantile values. 

\begin{figure}[h]
\centering
\includegraphics[ height=2.2 in, width=3.5 in]{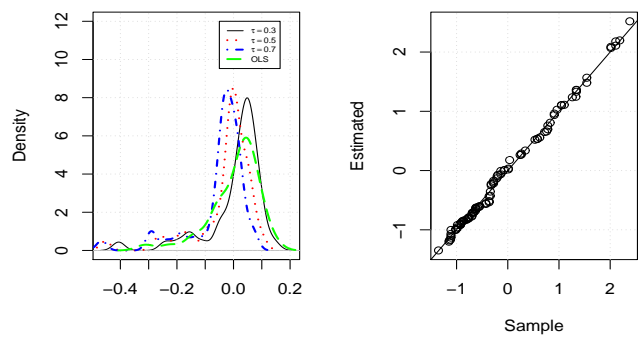}
\caption{Left: estimated densities of residuals from Alt.3 and the proposed method with $\tau=0.3, 0.5, 0.7$. Right: Lack-of-fit diagnostic QQ plot.}
\label{Fig:5}
\end{figure}

We also conduct model diagnostics using a QQ plot to intuitively assess model fitting. Specifically, we first randomly generate $\breve\tau$ from the uniform distribution on [0,1]. 
We then fit data using the proposed method with quantile $\breve\tau$ and obtain estimator $(\hat{\bm{b}}(\breve{\tau}), \hat{\bm{\gamma}}(\breve{\tau}))$. Next, we generate the response from the model $\breve{y} =\bm{\psi}^\top \hat{\bm{b}}(\breve{\tau}) + \bm{z}^\top \hat{\bm{\gamma}}(\breve{\tau}) $, where $(\bm{\psi}, \bm{z})$ are randomly selected from the original data. We repeat this process 100 times and obtain a sample of 100 simulated fat values. The right panel of Figure \ref{Fig:5} gives the QQ plot for the simulated and observed fat contents. Most points are very close to the 45-degree line, which suggests satisfactory model fitting.

\end{document}